\theoremstyle{plain}
\numberwithin{equation}{section}
\newtheorem{thm}{Theorem}[section]
\newtheorem{lem}[thm]{Lemma}
\newtheorem{cor}[thm]{Corollary}
\newenvironment{exam}[1]
{\begin{flushleft}\textbf{Example #1}.\enspace}
{\end{flushleft}}
\newcounter{cond}
\newcommand{\complex}{{\mathbb C}}
\newcommand{\real}{{\mathbb R}}
\newcommand{\tbullet}{\mathrel{\raise .4ex\hbox{\tiny$\bullet$}}} % 5.8.20 THIS for larger cdot as times
\newcommand{\deltasub}{{_\Delta}} % 8.21
\newcommand{\zeroone}{{0\,--1\ }} % 8.21
\newcommand{\onezero}{{1--\,0\ }} % 8.21
\newcommand{\rmtr}{\mathrm{tr\,}}
\newcommand{\rmprob}{\mathrm{Prob\,}}
\newcommand{\rmdiag}{\mathrm{diag\,}}
\newcommand{\rmre}{\mathrm{Re\,}}
\newcommand{\bscript}{\mathcal{B}}
\newcommand{\escript}{\mathcal{E}}
\newcommand{\fscript}{\mathcal{F}}
\newcommand{\gscript}{\mathcal{G}}
\newcommand{\iscript}{\mathcal{I}}
\newcommand{\jscript}{\mathcal{J}}
\newcommand{\lscript}{\mathcal{L}}
\newcommand{\sscript}{\mathcal{S}}
\newcommand{\tscript}{\mathcal{T}}
\newcommand{\iscripthat}{\widehat{\iscript}}
\newcommand{\jscripthat}{\widehat{\jscript}}
\newcommand{\vtilde}{\widetilde{V}}
\newcommand{\ab}[1]{\left|#1\right|}
\newcommand{\doubleab}[1]{\left\|#1\right\|}
\newcommand{\brac}[1]{\left\{#1\right\}}
\newcommand{\paren}[1]{\left(#1\right)}
\newcommand{\sqbrac}[1]{\left[#1\right]}
\newcommand{\elbows}[1]{{\left\langle#1\right\rangle}}
\newcommand{\ket}[1]{{\left|#1\right>}}
\newcommand{\bra}[1]{{\left<#1\right|}}
\newcommand{\sqparen}[1]{{\left[#1\right)}}
\begin{document}

\title{COARSE-GRAINING OF OBSERVABLES}
\author{Stan Gudder\\ Department of Mathematics\\
University of Denver\\ Denver, Colorado 80208\\
sgudder@du.edu}
\date{}
\maketitle

\begin{abstract}
We first define the coarse-graining of probability measures in terms of stochastic kernels. We define when a probability measure is part of another probability measure and say that two probability measures coexist if they are both parts of a single probability measure. We then show that any two probability measures coexist. We extend these concepts to observables and instruments and mention that two observables need not coexist. We define the discretization of an observable as a special case of coarse-graining and show that these have \zeroone stochastic kernels. We next consider finite observables and instruments and show that in these cases, stochastic kernels are replaced by stochastic matrices. We also show that coarse-graining is the same as post-processing in this finite case. We then consider sequential products of observables and discuss the sequential product of a post-processed observable with another observable. We briefly discuss SIC observables and the example of qubit observables.
\end{abstract}

\section{Coarse-Graining of Measures}  % Section 1
We denote the set of probability measures on a measurable space $(\Omega ,\fscript )$ by $\rmprob (\Omega ,\fscript )$. Let $(\Omega _1,\fscript _1)$,
$(\Omega _2,\fscript _2)$ be measurable spaces. A map $v\colon\Omega _1\times\fscript _2\to\sqbrac{0,1}$ that satisfies $x\mapsto v(x,\Delta )$ is measurable for all $\Delta\in\fscript _2$ and $v(x,\tbullet )\in\rmprob (\Omega _2,\fscript _2)$ for all $x\in\Omega _1$ is called a
\textit{stochastic kernel} \cite{hz12}. If $v$ is a stochastic kernel, define
\begin{equation*}
V\colon\rmprob (\Omega _1,\fscript _1)\to\rmprob (\Omega _2,\fscript _2)
\end{equation*}
by $\sqbrac{V(\mu )}(\Delta )=\int v(x,\Delta )\mu (dx)$. We call $v$ the stochastic kernel for $V$ and we say that $V(\mu )$ is a
\textit{coarse-graining} of $\mu$. We think of $V(\mu )$ as an imprecise version of $\mu\in\rmprob (\Omega _1,\fscript _1)$ on $(\Omega _2,\fscript _2)$. Notice that $V$ is an affine map because if $0\le\lambda _i\le 1$, $\sum\lambda _i=1$, then
\begin{align*}
\sqbrac{V\paren{\sum\lambda _i\mu _i}}(\Delta )&=\int v(x,\Delta )\paren{\sum\lambda _i\mu _i}(dx)=\sum\lambda _i\int v(x,\Delta )\mu _i (dx)\\
   &=\sum\lambda _i\sqbrac{V(\mu _i)}(\Delta )
\end{align*}
for all $\Delta\in\fscript _2$. Moreover, if $f\colon\Omega _2\to\real$ is measurable, then
\begin{align*}
\int _{\Omega _2}f(y)\sqbrac{V(\mu )}(dy)&\!=\!\int _{\Omega _2}\!f(y)\int _{\Omega _1}v(x,dy)\mu (dx)
    \!=\!\int _{\Omega _2}\int _{\Omega _1}\!f(y)v(x,dy)\mu (dx)\\
    &=\int _{\Omega _1}\sqbrac{\int _{\Omega _2}f(y)v(x,dy)}\mu (dx)
\end{align*}

\begin{exam}{1}  % Example 1
The map $v(x,\Delta )=\chi _{\deltasub}(x)$ is a stochastic kernel from $(\Omega ,\fscript )$ to $(\Omega ,\fscript )$. The corresponding coarse-graining map $V\colon\rmprob (\Omega ,\fscript )\to\rmprob (\Omega ,\fscript )$ satisfies
\begin{equation*}
V(\mu )(\Delta )=\int v(x,\Delta )\mu (dx)=\int\chi _{\deltasub}(x)\mu (dx)=\int _\Delta\mu (dx)=\mu (\Delta )
\end{equation*}
for all $\Delta\in\fscript$. Hence, $V(\mu )=\mu$ so $V$ is the identity map.\hfill\qedsymbol
\end{exam}

\begin{exam}{2}  % Example 2
Let $\nu\in\rmprob (\Omega _2,\fscript _2)$ and let $v\colon\Omega _1\times\fscript _2\to\sqbrac{0,1}$ be defined by $v(x,\Delta )=\nu (\Delta )$ for all
$x\in\Omega _1$, $\Delta\in\fscript _2$. Then $v$ is a stochastic kernel and the corresponding coarse-graining map is
\begin{equation*}
V(\mu )(\Delta )=\int v(x,\Delta )\mu (dx)=\int\nu (\Delta )\mu (dx)=\nu (\Delta )
\end{equation*}
Hence, $V$ is the constant map $V(\mu )=\nu$.\hfill\qedsymbol
\end{exam}

We define the \textit{Dirac measure at} $x$ on $(\Omega ,\fscript )$ by $\delta _x$ where $\delta _x(\Delta )=1$ if and only if $x\in\Delta$.

\begin{lem}    % Lemma 1.1
\label{lem11}
{\rm{(a)}}\enspace If $v\colon\Omega _1\times\fscript _2\to\sqbrac{0,1}$ is a stochastic kernel for
$V\colon\rmprob (\Omega _1,\fscript _1)\to(\Omega _2,\fscript _2)$ then $v(x,\Delta )=V(\delta _x)(\Delta )$.
{\rm{(b)}}\enspace If $V\colon\rmprob (\Omega _1,\fscript _1)\to\rmprob (\Omega _2,\fscript _2)$ has a stochastic kernel $v$, then $v$ is unique.
\end{lem}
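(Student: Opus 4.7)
My plan is to handle (a) by a direct computation from the definition, and then obtain (b) as an immediate consequence.

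For (a), I would start from the defining formula $[V(\mu)](\Delta) = \int v(x,\Delta)\,\mu(dx)$ and specialize to $\mu = \delta_x$. The key computation is
\begin{equation*}
V(\delta_x)(\Delta) = \int_{\Omega_1} v(y,\Delta)\,\delta_x(dy) = v(x,\Delta),
\end{equation*}
where the second equality is the standard fact that integration of a measurable function against a Dirac measure evaluates the function at the point. To justify this, I would note that $y \mapsto v(y,\Delta)$ is measurable by the stochastic kernel assumption, so the integral against $\delta_x$ is well-defined; then either cite the result or sketch it via the usual approximation argument (check it for indicator functions $\chi_E$, where $\int \chi_E \, d\delta_x = \delta_x(E) = \chi_E(x)$, extend by linearity to simple functions, and pass to the limit for bounded measurable functions).

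For (b), suppose $v_1$ and $v_2$ are both stochastic kernels for the same $V$. Applying part (a) to each gives $v_1(x,\Delta) = V(\delta_x)(\Delta) = v_2(x,\Delta)$ for every $x \in \Omega_1$ and every $\Delta \in \fscript_2$, so $v_1 = v_2$.

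I do not anticipate any real obstacles here; the only step that requires care is the evaluation of the integral against a Dirac measure, and even that is standard. The lemma is essentially a bookkeeping fact saying that the action of $V$ on the extreme points $\delta_x$ of $\rmprob(\Omega_1,\fscript_1)$ recovers the kernel. One minor point worth mentioning explicitly is that $\delta_x$ is a genuine element of $\rmprob(\Omega_1,\fscript_1)$, which requires $\{x\}$ or at least all singletons relevant to the formula to lie in $\fscript_1$; but since $\delta_x$ is defined on all of $\fscript_1$ by $\delta_x(\Delta) = \chi_\Delta(x)$, measurability of singletons is not actually needed.
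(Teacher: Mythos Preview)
Your proposal is correct and matches the paper's proof essentially verbatim: the paper computes $V(\delta_x)(\Delta)=\int v(y,\Delta)\,\delta_x(dy)=v(x,\Delta)$ for (a) and then remarks that (b) follows from (a). Your additional remarks about justifying integration against $\delta_x$ and about $\delta_x\in\rmprob(\Omega_1,\fscript_1)$ are more detail than the paper provides, but do not change the approach.
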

\begin{proof}
(a)\enspace If $v$ is a stochastic kernel for $V$, then
\begin{equation*} 
V(\delta _x)(\Delta )=\int v(y,\Delta )\delta _x(dy)=v(x,\Delta )
\end{equation*}
(b)\enspace follows from (a).
\end{proof}

It can be shown that an arbitrary affine map $V\colon\rmprob (\Omega _1,\fscript _1)\to\rmprob (\Omega _2,\fscript _2)$ need not have a stochastic kernel and hence need not be a coarse-graining. One way to accomplish this is to construct such a map $V$ where $x\mapsto V(\delta _x)(\Delta )$ is not measurable for some $\Delta\in\fscript _2$. We leave the details of this to the reader. Then $V$ does not have a stochastic kernel $v$ because if it did, then by Lemma~\ref{lem11}(a), $v(x,\Delta )=V(\delta _x)(\Delta )$ so $x\mapsto v(x,\Delta )$ is not measurable for some $\Delta\in\fscript _2$ which is a contradiction.

Let $(\Omega _j\fscript _j)$, $j=1,2,3$, be measurable spaces and let $v\colon\Omega _1\times\fscript _2\to\sqbrac{0,1}$,
$u\colon\Omega _2\times\fscript _3\to\sqbrac{0,1}$ be stochastic kernels. Define $u\circ v\colon\Omega _1\times\fscript _3\to\sqbrac{0,1}$ by
$u\circ v(x,\Delta )=\int _{\Omega _2}u(y,\Delta )v(x,dy)$. Then $u\circ v$ is a stochastic kernel.

\begin{lem}    % Lemma 1.2
\label{lem12}
Let $V\colon\rmprob (\Omega _1,\fscript _1)\to\rmprob (\Omega _2,\fscript _2)$ and
$U\colon\rmprob (\Omega _2,\fscript _2)\to\rmprob (\Omega _3,\fscript _3)$ be coarse-grainings with corresponding stochastic kernels $v,u$. Then their composition $U\circ V\colon\rmprob (\Omega _1,\fscript _1)\to\rmprob (\Omega _3,\fscript _3)$ has stochastic kernel $u\circ v$.
\end{lem}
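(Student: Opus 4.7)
The plan is to verify directly, using the change--of--variable identity already established in the excerpt, that the stochastic kernel $u\circ v$ reproduces the measure $(U\circ V)(\mu)$ on every $\Delta\in\fscript_3$. Uniqueness (Lemma~\ref{lem11}(b)) then finishes the argument.

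First I would fix $\mu\in\rmprob(\Omega_1,\fscript_1)$ and $\Delta\in\fscript_3$ and expand $(U\circ V)(\mu)(\Delta)$ using the definition of $U$: it equals $\int_{\Omega_2}u(y,\Delta)\,[V(\mu)](dy)$. The integrand $y\mapsto u(y,\Delta)$ is $\fscript_2$-measurable by the stochastic--kernel property of $u$, so the identity
\begin{equation*}
\int_{\Omega_2}f(y)\,[V(\mu)](dy)=\int_{\Omega_1}\!\left[\int_{\Omega_2}f(y)\,v(x,dy)\right]\mu(dx)
\end{equation*}
derived in the preamble applies with $f(y)=u(y,\Delta)$. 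This turns the expression into $\int_{\Omega_1}\!\left[\int_{\Omega_2}u(y,\Delta)\,v(x,dy)\right]\mu(dx)$, which by the definition of $u\circ v$ is precisely $\int_{\Omega_1}(u\circ v)(x,\Delta)\,\mu(dx)$.

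Thus $(U\circ V)(\mu)(\Delta)=\int_{\Omega_1}(u\circ v)(x,\Delta)\,\mu(dx)$ for every $\Delta\in\fscript_3$. Since $u\circ v$ was already observed to be a stochastic kernel, this is exactly the statement that $U\circ V$ is the coarse--graining map associated with $u\circ v$. By Lemma~\ref{lem11}(b), the kernel of $U\circ V$ is unique, so it must equal $u\circ v$.

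There is no serious obstacle: the only thing to be careful about is that $u(\tbullet,\Delta)$ really is an $\fscript_2$-measurable function (so that the change--of--variable identity can legitimately be invoked), but this is built into the definition of a stochastic kernel and needs only to be noted in passing.
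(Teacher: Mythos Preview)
Your argument is correct and follows essentially the same route as the paper: both unfold $(U\circ V)(\mu)(\Delta)$ as $\int_{\Omega_2}u(y,\Delta)\,V(\mu)(dy)$, then use the change-of-variable identity from the preamble (with $f=u(\tbullet,\Delta)$) to reduce to $\int_{\Omega_1}(u\circ v)(x,\Delta)\,\mu(dx)$. Your explicit remarks about measurability of $u(\tbullet,\Delta)$ and the appeal to Lemma~\ref{lem11}(b) for uniqueness are nice added care, but do not change the substance of the proof.
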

\begin{proof}
For all $\mu\in\rmprob (\Omega _1,\fscript _1)$, $\Delta\in\fscript _3$ we have that
\begin{align*}
\sqbrac{U\circ V(\mu )}(\Delta )&=\int _{\Omega _2}u(y,\Delta )V(\mu )(dy)=\int _{\Omega _2}\int _{\Omega _1}u(y,\Delta )v(x,dy)\mu (dx)\\
   &=\int _{\Omega _1}\int _{\Omega _2}u(y,\Delta )v(x,dy)\mu (dx)=\int _{\Omega _1}u\circ v (x,\Delta )\mu (dx)
\end{align*}
Hence, the stochastic kernel for $U\circ V$ is $u\circ v$.
\end{proof}

We say that $\mu _2\in\rmprob (\Omega _2,\fscript _2)$ is \textit{part} of $\mu _1\in\rmprob (\Omega _1,\fscript _1)$ if there exists a measurable function $f\colon\Omega _1\to\Omega _2$ such that $\mu _2(\Delta )=\mu _1\sqbrac{f^{-1}(\Delta )}$ for all $\Delta\in\fscript _2$. Define
$V_f\colon\rmprob (\Omega _1,\fscript _1)\to\rmprob (\Omega _2,\fscript _2)$ by $(V_f\mu )(\Delta )=\mu\sqbrac{f^{-1}(\Delta )}$. Thus, $\mu _2$ is part of $\mu _1$, if and only if $\mu _2=V_f(\mu _1)$ for a measurable function $f\colon\Omega _1\to\Omega _2$. Notice that $V_f$ is affine because
\begin{equation*} 
\sqbrac{V_f\paren{\sum\lambda _i\mu _i)}(\Delta )}=\sum\lambda _i\mu _i\sqbrac{(V_f\mu _i)(\Delta )}=\paren{\sum\lambda _iV_f\mu _i)(\Delta )}
\end{equation*}
and hence, $V_f\paren{\sum\lambda _iu_i}=\sum\lambda _iV_f(\mu _i)$.

\begin{lem}    % Lemma 1.3
\label{lem13}
A map $v\colon\Omega _1\times\fscript _2\to\sqbrac{0,1}$ is the stochastic kernel for $V_f$ if and only if $v(x,\Delta )=\chi _{f^{-1}(\Delta )}(x)$ for all
$x\in\Omega _1$, $\Delta\in\fscript _2$.
\end{lem}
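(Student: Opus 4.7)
The plan is to reduce this to Lemma~\ref{lem11} combined with a direct verification. The statement has two directions: one shows the formula must hold if $v$ is a stochastic kernel for $V_f$, the other shows the formula does give such a kernel.

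First I would handle the forward direction. By Lemma~\ref{lem11}(a), if $v$ is a stochastic kernel for $V_f$ then $v(x,\Delta) = V_f(\delta_x)(\Delta)$. Applying the definition of $V_f$ gives $V_f(\delta_x)(\Delta) = \delta_x[f^{-1}(\Delta)]$, and by definition of the Dirac measure this equals $1$ when $x \in f^{-1}(\Delta)$ and $0$ otherwise, which is exactly $\chi_{f^{-1}(\Delta)}(x)$. This is essentially a one-line computation once Lemma~\ref{lem11}(a) is invoked.

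For the converse, I would take $v(x,\Delta) := \chi_{f^{-1}(\Delta)}(x)$ and verify two things: (i) $v$ is actually a stochastic kernel, and (ii) the associated map coincides with $V_f$. For (i), fixing $\Delta \in \fscript_2$, the map $x\mapsto \chi_{f^{-1}(\Delta)}(x)$ is measurable because $f^{-1}(\Delta) \in \fscript_1$ by measurability of $f$; and fixing $x$, the map $\Delta \mapsto \chi_{f^{-1}(\Delta)}(x)$ is a probability measure on $(\Omega_2,\fscript_2)$ since it is the Dirac measure $\delta_{f(x)}$ (the condition $x \in f^{-1}(\Delta)$ is equivalent to $f(x) \in \Delta$). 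For (ii), for any $\mu \in \rmprob(\Omega_1,\fscript_1)$ and $\Delta\in\fscript_2$,
\begin{equation*}
\int v(x,\Delta)\,\mu(dx) = \int \chi_{f^{-1}(\Delta)}(x)\,\mu(dx) = \mu\sqbrac{f^{-1}(\Delta)} = (V_f\mu)(\Delta),
\end{equation*}
so $v$ is indeed a stochastic kernel for $V_f$.

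I do not expect any real obstacle here. The only thing to be mildly careful about is checking the measurability condition in (i), which uses that $f$ is assumed measurable, and noting that the uniqueness assertion implicit in the ``if and only if'' is already covered by Lemma~\ref{lem11}(b). No new technique is needed beyond unwinding definitions.
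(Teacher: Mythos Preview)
Your proof is correct and follows essentially the same approach as the paper: verify by direct integration that $\chi_{f^{-1}(\Delta)}(x)$ is a stochastic kernel yielding $V_f$, and appeal to Lemma~\ref{lem11} for the other direction. The only cosmetic difference is that the paper handles the ``only if'' direction by citing uniqueness (Lemma~\ref{lem11}(b)) rather than computing $V_f(\delta_x)(\Delta)$ via Lemma~\ref{lem11}(a) as you do, but since (b) is an immediate consequence of (a) this is not a genuine difference in method.
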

\begin{proof}
If $v(x,\Delta )=\chi _{f^{-1}(\Delta )}$, then $v$ is a stochastic kernel and
\begin{align*}
\int v(x,\Delta )\mu (dx)&=\int\chi _{f^{-1}(\Delta )}(x)\mu (dx)=\int _{f^{-1}(\Delta )}\mu (dx)\\
   &=\mu\sqbrac{f^{-1}(\Delta )}=(V_f\mu )(\Delta )
\end{align*}
for all $x\in\Omega _1$, $\Delta\in\fscript _2$. Hence, $v$ is a stochastic kernel for $V_f$. Since stochastic kernels are unique, the converse holds.
\end{proof}

\begin{lem}    % Lemma 1.4
\label{lem14}
If $f\colon\Omega _1\to\Omega _2$ and $g\colon\Omega _2\to\Omega _3$ are measurable, $V_g\circ V_f=V_{g\circ f}$ and the stochastic kernel for $V_{g\circ f}$ is $\omega (x,\Delta )=\chi _{f^{-1}\paren{g^{-1}(\Delta )}}(x)$.
\end{lem}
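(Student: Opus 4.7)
The plan is to prove the two assertions separately, leaning heavily on the earlier lemmas so that almost nothing needs to be computed from scratch.

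First I will verify the identity $V_g\circ V_f=V_{g\circ f}$ by a direct unwinding of the definitions. For any $\mu\in\rmprob(\Omega_1,\fscript_1)$ and $\Delta\in\fscript_3$, I apply the definition of $V_f$ and $V_g$ twice:
\begin{equation*}
\sqbrac{(V_g\circ V_f)(\mu)}(\Delta)=(V_f\mu)\sqbrac{g^{-1}(\Delta)}=\mu\sqbrac{f^{-1}(g^{-1}(\Delta))}=\mu\sqbrac{(g\circ f)^{-1}(\Delta)},
\end{equation*}
which is precisely $V_{g\circ f}(\mu)(\Delta)$. Note that $g\circ f$ is measurable as the composition of measurable maps, so $V_{g\circ f}$ is defined, and the key point used is the set-theoretic identity $(g\circ f)^{-1}(\Delta)=f^{-1}(g^{-1}(\Delta))$.

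For the stochastic kernel, I would invoke Lemma~\ref{lem13} applied to the measurable map $g\circ f\colon\Omega_1\to\Omega_3$: its unique stochastic kernel is $\chi_{(g\circ f)^{-1}(\Delta)}(x)$, which by the preimage identity above equals $\chi_{f^{-1}(g^{-1}(\Delta))}(x)=\omega(x,\Delta)$. Alternatively, one can obtain the same conclusion by composing the two individual kernels via Lemma~\ref{lem12}: the kernel of $V_f$ is $v(x,\Delta)=\chi_{f^{-1}(\Delta)}(x)$ and that of $V_g$ is $u(y,\Delta)=\chi_{g^{-1}(\Delta)}(y)$, so
\begin{equation*}
u\circ v(x,\Delta)=\int_{\Omega_2}\chi_{g^{-1}(\Delta)}(y)\,\chi_{f^{-1}(dy)}(x)=\chi_{g^{-1}(\Delta)}(f(x))=\chi_{f^{-1}(g^{-1}(\Delta))}(x),
\end{equation*}
giving the same $\omega$; uniqueness of the kernel (Lemma~\ref{lem11}(b)) then identifies this with the kernel of $V_{g\circ f}$.

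There is no genuine obstacle here — the only thing to be careful about is writing the preimage identity in the correct order, since it is easy to flip $f$ and $g$. I would present the Lemma~\ref{lem13} route as the main argument because it is the shortest, and mention the Lemma~\ref{lem12} consistency check only if it aids exposition.
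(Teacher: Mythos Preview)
Your proposal is correct and matches the paper's proof essentially line for line: the paper proves $V_g\circ V_f=V_{g\circ f}$ by the same direct unwinding of the definitions (using the preimage identity $(g\circ f)^{-1}(\Delta)=f^{-1}(g^{-1}(\Delta))$), and then invokes Lemma~\ref{lem13} applied to $g\circ f$ to read off the kernel. The paper does not include your alternative consistency check via Lemma~\ref{lem12}, so if you want to mirror the paper exactly you can drop it, but there is nothing wrong with mentioning it.
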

\begin{proof}
For all $\mu\in\rmprob (\Omega _1,\fscript _1)$ and $\Delta\in\fscript _3$ we have that
\begin{align*}
(V_g\circ V_f)\mu (\Delta )&=V_f\sqbrac{\mu\paren{g^{-1}(\Delta )}}=\mu\sqbrac{f^{-1}\paren{g^{-1}(\Delta )}}=\mu\sqbrac{(g\circ f)^{-1}(\Delta )}\\
   &=(V_{g\circ f}\mu )(\Delta )
\end{align*}
Hence, $V_g\circ V_f=V_{g\circ f}$. It follows from Lemma~\ref{lem13} that the stochastic kernel for $V_{g\circ f}$ is
\begin{equation*}
w(x,\Delta )=\chi _{(g\circ f)^{-1})\Delta )}(x)=\chi _{f^{-1}\paren{g^{-1}(\Delta )}}(x)\qedhere
\end{equation*}
\end{proof}

We say that two probability measures \textit{coexist}, if they are both parts of another probability measure.

\begin{lem}    % Lemma 1.5
\label{lem15}
If $\mu _1\in\rmprob (\Omega _1,\fscript _1)$, $\mu _2\in\rmprob (\Omega _2\fscript _2)$ then $\mu _1,\mu _2$ coexist.
\end{lem}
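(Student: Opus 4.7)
The plan is to construct a joint probability measure on a product space that has $\mu_1$ and $\mu_2$ as its marginals, and then to exhibit the two coordinate projections as the measurable functions required by the definition of ``part.''

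First I would form the product measurable space $(\Omega _1\times\Omega _2,\fscript _1\otimes\fscript _2)$, where $\fscript _1\otimes\fscript _2$ is the usual product $\sigma$-algebra generated by measurable rectangles $\Delta _1\times\Delta _2$ with $\Delta _i\in\fscript _i$. On this space, let $\mu=\mu _1\times\mu _2$ denote the product probability measure, the existence and uniqueness of which is standard (it is determined by $\mu(\Delta _1\times\Delta _2)=\mu _1(\Delta _1)\mu _2(\Delta _2)$ and extended via Carath\'eodory).

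Next, consider the coordinate projection maps $\pi _i\colon\Omega _1\times\Omega _2\to\Omega _i$ for $i=1,2$, which are measurable by construction of the product $\sigma$-algebra. For $\Delta\in\fscript _1$ we have $\pi _1^{-1}(\Delta)=\Delta\times\Omega _2$, so
\begin{equation*}
(V_{\pi _1}\mu)(\Delta)=\mu\sqbrac{\pi _1^{-1}(\Delta)}=\mu(\Delta\times\Omega _2)=\mu _1(\Delta)\mu _2(\Omega _2)=\mu _1(\Delta),
\end{equation*}
so $\mu _1=V_{\pi _1}(\mu)$ and hence $\mu _1$ is part of $\mu$. An identical computation with $\pi _2$ shows $\mu _2=V_{\pi _2}(\mu)$, so $\mu _2$ is also part of $\mu$. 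Therefore $\mu _1$ and $\mu _2$ coexist.

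There is no real obstacle here; the only ingredient beyond routine definitions is the existence of the product measure, which is classical. The argument shows in fact that any two (and indeed any finite family of) probability measures automatically coexist, a point that contrasts with the observable setting mentioned in the abstract, where coexistence genuinely fails in general.
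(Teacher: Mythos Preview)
Your proof is correct and follows essentially the same approach as the paper: both construct the product measure $\mu=\mu_1\times\mu_2$ on the product space and use the two coordinate projections (your $\pi_1,\pi_2$ are the paper's $f,g$) to exhibit $\mu_1$ and $\mu_2$ as parts of $\mu$. The only differences are cosmetic notation and your added remarks on the product-measure construction and the contrast with observables.
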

\begin{proof}
Define $\mu\in\rmprob (\Omega _1\times\Omega _2,\fscript _1\times\fscript _2)$ by $\mu =\mu _1\times\mu _2$ and define
$f\colon\Omega _1\times\Omega _2\to\Omega _1$ by $f(x,y)=x$, $g\colon\Omega _1\times\Omega _2\to\Omega _2$ by $g(x,y)=y$. Then $f$ and $g$ are measurable and if $\Delta _1\in\fscript _1$ we obtain
\begin{equation*}
\mu\sqbrac{f^{-1}(\Delta _1)}=\mu (\Delta _1\times\Omega _2)=\mu _1(\Delta _1)\mu _2(\Omega _2)=\mu _1(\Delta _1)
\end{equation*}
Hence, $\mu _1$ is a part of $\mu$. Similarly, if $\Delta _2\in\fscript _2$, then $\mu\sqbrac{g^{-1}(\Delta _2)}=\mu _2(\Delta _2)$ so $\mu _2$ is a part of $\mu$.
\end{proof}

Let $(\Omega ,\fscript )$ be a measurable space and let $(\Omega _1,2^{\Omega _1})$ be a finite measurable space with
$\Omega _1=\brac{1,2,\ldots ,n}$. Let $B_1,B_2,\ldots ,B_n$ be a \textit{measurable partition} of $\Omega$. That is
$B_i\cap B_j=\emptyset$ for $i\ne j$ and $\cup B_i=\Omega$. Define
\begin{equation*}
V\colon\rmprob (\Omega ,\fscript )\to\rmprob (\Omega _1,2^{\Omega _1})
\end{equation*}
by $V(\mu )(\Delta )=\sum\brac{\mu (B_i)\colon i\in\Delta}$. Then $V$ is affine because
\begin{align*}
V\paren{\sum\lambda _j\mu _j}(\Delta )&=\sum _{i\in\Delta}\sqbrac{\sum _j\lambda _j\mu _j(B_i)}=\sum _j\lambda _j\sum _{i\in\Delta}\mu _j(B_i)\\
   &=\sum _j\lambda _jV(\mu _j)(\Delta )
\end{align*}
so that $V\paren{\sum\lambda _j\mu _j}=\sum\lambda _jV(\mu _j)$. We call $V$ a \textit{discretization map} and $V(\mu )$ a
\textit{discretization} of $\mu$. A stochastic kernel $v(x,\Delta )$ is called a \textrm{\zeroone} \textit{stochastic kernel} if $v(x,\Delta )=0\hbox{ or }1$ for all $x,\Delta$.

\begin{thm}    % Theorem 1.6
\label{thm16}
An affine map $V\colon\rmprob (\Omega ,\fscript )\to\rmprob (\Omega _1,2^{\Omega _1})$ is a discretization if and only if $V$ has a \zeroone stochastic kernel.
\end{thm}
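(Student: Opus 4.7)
The plan is to establish a bijective correspondence between measurable partitions $B_1,\ldots,B_n$ of $\Omega$ and $\zeroone$ stochastic kernels $v\colon\Omega\times 2^{\Omega _1}\to\sqbrac{0,1}$, and then show that this correspondence identifies discretizations with affine maps having $\zeroone$ stochastic kernels. By Lemma~\ref{lem11}(b) the kernel is determined by $V$, so I only need to match up the two descriptions.

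For the forward direction, I would start with a discretization $V$ determined by a measurable partition $B_1,\ldots,B_n$ and define
\begin{equation*}
v(x,\Delta )=\sum _{i\in\Delta}\chi _{B_i}(x)=\chi _{\cup _{i\in\Delta}B_i}(x).
\end{equation*}
Because the $B_i$ are disjoint this is $\zeroone$ valued, the map $x\mapsto v(x,\Delta )$ is measurable since each $B_i$ is measurable, and $v(x,\tbullet )$ is a probability measure on $(\Omega _1,2^{\Omega _1})$ (it is the Dirac mass at the unique $i$ with $x\in B_i$). Integrating gives $\int v(x,\Delta )\mu (dx)=\mu (\cup _{i\in\Delta}B_i)=\sum _{i\in\Delta}\mu (B_i)=V(\mu )(\Delta )$, so $v$ is the kernel of $V$.

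For the converse, I suppose $V$ has a $\zeroone$ stochastic kernel $v$. For each fixed $x$, the measure $v(x,\tbullet )\in\rmprob (\Omega _1,2^{\Omega _1})$ takes only the values $0$ and $1$, and the normalization $\sum _{i=1}^n v(x,\brac{i})=v(x,\Omega _1)=1$ forces exactly one index $i$ with $v(x,\brac{i})=1$. I would then define $B_i=\brac{x\in\Omega :v(x,\brac{i})=1}$; measurability of $x\mapsto v(x,\brac{i})$ makes each $B_i$ measurable, and the previous sentence shows the $B_i$ form a partition of $\Omega$. For any $\Delta\subseteq\Omega _1$ and $x\in\Omega$, $v(x,\Delta )=\sum _{i\in\Delta}v(x,\brac{i})=\chi _{\cup _{i\in\Delta}B_i}(x)$, hence
\begin{equation*}
V(\mu )(\Delta )=\int v(x,\Delta )\mu (dx)=\mu\paren{\cup _{i\in\Delta}B_i}=\sum _{i\in\Delta}\mu (B_i),
\end{equation*}
exhibiting $V$ as the discretization associated with the partition $B_1,\ldots ,B_n$.

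The only mildly subtle point, which I would treat as the main obstacle, is verifying in the converse direction that the sets $B_i$ really do form a partition: this needs both the $\zeroone$ hypothesis and the countable additivity of $v(x,\tbullet )$ on the finite $\sigma$-algebra $2^{\Omega _1}$ to conclude that exactly one of the values $v(x,\brac{1}),\ldots ,v(x,\brac{n})$ equals $1$. Everything else is bookkeeping, and the affine hypothesis on $V$ is automatically satisfied since any map with a stochastic kernel is affine (as noted at the start of the section).
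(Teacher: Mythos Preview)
Your proposal is correct and follows essentially the same route as the paper: in both directions you pass between the partition $\{B_i\}$ and the kernel via $v(x,\brac{i})=\chi_{B_i}(x)$, verify the partition property from $\sum_i v(x,\brac{i})=1$ together with the \zeroone hypothesis, and recover $V(\mu)(\Delta)=\sum_{i\in\Delta}\mu(B_i)$ by integration. The only cosmetic difference is that the paper invokes Lemma~\ref{lem11} to identify the kernel of a discretization as $v(x,\Delta)=V(\delta_x)(\Delta)$ before writing it explicitly, whereas you define the candidate kernel directly and check it works; both arrive at the same formula $v(x,\Delta)=\chi_{\cup_{i\in\Delta}B_i}(x)$.
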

\begin{proof}
Suppose $V$ is a discretization and $V$ has stochastic kernel $v(x,\Delta )$. Then by Lemma~\ref{lem11} we obtain for $j=1,2,\ldots ,n$ that
\begin{equation*}
v(x,\brac{j})=V(\delta _x)(\brac{j})=\sum\brac{\delta _x(B_i)\colon i\in\brac{j}}=\delta _x(B_j)
\end{equation*}
Hence,
\begin{align*}
v(x,\Delta )&=V(\delta _x)(\Delta )\sum _{j\in\Delta}V(\delta _x)(\brac{j})=\sum _{j\in\Delta}\delta _x(B_j)\\
   &=\delta _x\paren{\bigcup _{j\in\Delta}B_j}=\chi _{_{\!\bigcup\limits _{j\in\Delta}\!\!B_j}}(x)
\end{align*}
for all $x\in\Omega$, $\Delta\in 2^{\Omega _1}$. To show that $v(x,\Delta )$ is actually the stochastic kernel for $V$ we have that
\begin{align*}
\int v(x,\Delta )\mu (dx)&=\int \chi _{_{\!\bigcup\limits _{j\in\Delta}\!\!B_j}}(x)\mu (dx)=\int _{\bigcup\limits _{j\in\Delta}\!\!B_j}\mu (dx)\\
   &=\sum _{j\in\Delta}\mu (B_j)=V(\mu )(\Delta )
\end{align*}
Of course, $v(x,\Delta )$ is a \zeroone stochastic kernel. Conversely, suppose $v(x,\Delta )$ is a \zeroone stochastic kernel for
\begin{equation*}
V\colon\rmprob (\Omega ,\fscript )\to\rmprob (\Omega _1,2^{\Omega _1})
\end{equation*}
Then $v(x,\Delta )=\sum\limits _{j\in\Delta}v(x,\brac{j})$ for all $x\in\Omega$, $\Delta\in 2^{\Omega _1}$. Let $B_i$, $i=1,2,\ldots ,n$, be the measurable sets
\begin{equation*}
B_i\brac{x\in\Omega\colon v(x,\brac{i})=1}
\end{equation*}
If $x\in B_i\cap B_j$ for $i\ne j$, then $v(x,\brac{i})=v(x,\brac{j})=1$ and $v(x,\brac{i,j})=2$ which is a contradiction. Hence, $B_i\cap B_j=\emptyset$ for $i\ne j$. If $x\in\Omega$ and $v(x,\brac{i})=0$ for all $i=1,2,\ldots ,n$, then
\begin{equation*}
v(x,\Omega _1)=\sum _{i\in\Omega _1}v(x,\brac{i})=0
\end{equation*}
which is a contradiction. Hence, there exists an $i$ such that $v(x,\brac{i})=1$ so $\cup B_i=\Omega _1$. We conclude that $\brac{B_i}$ is a measurable partition of $\Omega$. Since
\begin{equation*}
(V\mu )(\brac{i})=\int v(x,\brac{i})\mu (dx)=\int _{B_i}\mu (dx)=\mu (B_i)
\end{equation*}
we have for all $\Delta\in 2^{\Omega _1}$ that
\begin{equation*}
(V\mu )(\Delta )=\sum _{i\in\Delta}(V\mu )(\brac{i})=\sum _{i\in\Delta}\mu (B_i)
\end{equation*}
We conclude that $V$ is a discretization map.
\end{proof}

When we consider a finite measurable space $(\Omega ,\fscript )$ we always assume that $\fscript =2^\Omega$ so $\fscript$ need not be specified. For $\Omega =\brac{x_1,x_2,\ldots ,x_n}$ we identify a $\mu\in\rmprob (\Omega )$ with the column vector with entries
$\mu (x_1),\mu (x_2),\ldots ,\mu (x_n)$ where we write $\mu \paren{\brac{x_i}}=\mu (x_i)$, $i=1,2,\ldots ,n$. An $m\times n$ matrix
$M=\sqbrac{m_{ij}}$ is a \textit{stochastic matrix} if $0\le m_{ij}\le 1$ and $\sum\limits _{j=1}^mm_{ij}=1$ for all $i=1,2,\ldots ,n$. In this finite case, the stochastic kernels are replaced by stochastic matrices. This is because, in the finite case, if $v(x,\Delta )$ is a stochastic kernel, then
$v(x_i,\brac{y_j})$ is a stochastic matrix and conversely, if $\sqbrac{m_{ij}}$ is a stochastic matrix, then 
\begin{equation*}
v(x_i,\Delta )=\sum\brac{m_{ij}\colon y_j\in\Delta}
\end{equation*}
is a stochastic kernel.

\begin{thm}    % Theorem 1.7
\label{thm17}
Let $\Omega _1=\brac{x_1,x_2,\ldots ,x_n}$, $\Omega _2=\brac{y_1,y_2,\ldots ,y_m}$ and let
$V\colon\rmprob (\Omega _1)\to\rmprob (\Omega _2)$ be affine. Then there exists a unique $m\times n$ stochastic matrix $\vtilde$ such that for every $\nu\in\rmprob (\Omega _1)$ we have $V(\nu )=\vtilde\nu$. Conversely, if $M$ is an $m\times n$ stochastic matrix, then there exists an affine map $V\colon\rmprob (\Omega _1)\to\rmprob (\Omega _2)$ such that $\vtilde =M$.
\end{thm}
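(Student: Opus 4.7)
The strategy is to exploit that any $\nu \in \rmprob(\Omega _1)$ is a convex combination $\nu = \sum _{j=1}^n \nu(x_j)\,\delta _{x_j}$ of the Dirac measures, so affinity of $V$ reduces the problem to specifying $V$ on the $n$ extreme points $\delta _{x_1}, \ldots , \delta _{x_n}$. Concretely, I would define $\vtilde$ column by column: the $j$-th column of $\vtilde$ is the column vector representation of the probability measure $V(\delta _{x_j}) \in \rmprob(\Omega _2)$. Each such column is nonnegative with entries summing to $1$, so $\vtilde$ is automatically a stochastic matrix in the sense defined above.

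To check $V(\nu) = \vtilde \nu$ for arbitrary $\nu$, I would write $\nu$ as the convex combination above, apply affinity of $V$, and recognize $\sum _j \nu(x_j)\, V(\delta _{x_j})$ as the matrix-vector product $\vtilde \nu$. Uniqueness of $\vtilde$ follows immediately: if a matrix $M$ satisfies $M\nu = V(\nu)$ for every $\nu$, then setting $\nu = \delta _{x_j}$ pins its $j$-th column down to $V(\delta _{x_j})$. For the converse, given a stochastic matrix $M$, I would define $V(\nu) := M\nu$; linearity of matrix multiplication gives affinity, and a short calculation using the column-sum-to-$1$ property of $M$ shows that $M\nu$ is a probability measure whenever $\nu$ is. The associated matrix $\vtilde$ then agrees with $M$ by inspection, since its $j$-th column is $V(\delta _{x_j}) = M\delta _{x_j}$, which is the $j$-th column of $M$.

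There is no genuine obstacle in this proof; the essential content is the standard observation that an affine map defined on the probability simplex $\rmprob(\Omega _1)$ is the restriction of a unique linear map on the ambient finite-dimensional space. The remaining work is bookkeeping, matching the column-vector encoding of probability measures with the matrix encoding of affine maps, and verifying that stochasticity of the matrix is equivalent to the image staying inside $\rmprob(\Omega _2)$.
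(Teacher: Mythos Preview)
Your proposal is correct and follows essentially the same approach as the paper: define $\vtilde$ via its columns $V(\delta _{x_j})$, use affinity and the decomposition $\nu=\sum _j\nu(x_j)\delta _{x_j}$ to obtain $V(\nu)=\vtilde\nu$, and verify uniqueness by evaluating at the $\delta _{x_j}$. The only cosmetic difference is that for the converse the paper defines $V$ on the Dirac measures and extends affinely, whereas you define $V(\nu)=M\nu$ directly; these amount to the same thing.
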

\begin{proof}
Let $V\colon\rmprob (\Omega _1)\to\rmprob (\Omega _2)$ be affine. Since every element of $\rmprob (\Omega _2)$ is a convex combination of
$\delta _{y_j}$, $j=1,2,\ldots ,m$ we have that
\begin{equation*}
V(\delta _{x_i})=\sum _{j=1}^m\mu _{ij}\delta _{y_j}
\end{equation*}
where $0\le\mu _{ij}\le 1$ and $\sum _{j=1}^m\mu _{ij}=1$, $i=1,2,\ldots ,n$. We conclude that $\vtilde =\sqbrac{\mu _{ij}}$ is an $m\times n$ stochastic matrix and $\mu _{ij}=\sqbrac{V(\delta _{x_i})}(y_j)$. Letting $\nu\in\rmprob (\Omega _1)$ we obtain $\nu =\sum\limits _{i=1}^n\nu _i\delta _{x_i}$ where
$\nu _i=\nu (x_)$, $i=1,2,\ldots ,n$. Since $0\le\nu _i\le 1$, $\sum\limits _{i=1}^n\nu _i=1$ and $V$ is affine, we conclude that
\begin{equation*}
V(\nu )=V\paren{\sum _{i=1}^n\nu _i\delta _{x_i}}=\sum _{i=1}^n\nu _iV(\delta _{x_i})=\sum _{i=1}^n\nu _i\sum _{j=1}^m\mu _{ij}\delta _{y_j}
   =\sum _{i,j}\mu _{ij}\nu (x_i)\delta _{y_j}
\end{equation*}
It follows that
\begin{equation*}
V(\nu )=\begin{bmatrix}V(\nu )(y_1)\\V(\nu )(y_2)\\\vdots\\V(\nu )(y_m)\end{bmatrix}
   =\begin{bmatrix}\sum\mu _{i1}\nu (x_i)\\\sum\mu _{i2}\nu (x_i)\\\vdots\\\sum\mu _{im}\nu (x_i)\end{bmatrix}
  =\vtilde\begin{bmatrix}\nu (x_1)\\\nu (x_2)\\\vdots\\\nu (x_n)\end{bmatrix} =\vtilde\nu
\end{equation*}
To show that $\vtilde$ is unique, suppose $V(\nu )=M\nu$ where $M=\sqbrac{M_{ij}}$ is an $m\times n$ matrix. We then obtain 
\begin{equation*}
M_{ij}=\elbows{\delta _{y_j},M\delta _{x_i}}=\elbows{\delta _{y_j},V(\delta _{x_i}}=\elbows{\delta _{y_j},\sum _{k=1}^m\mu _{ik}\delta _{y_k}}
   =\mu _{ij}=\vtilde _{ij}
\end{equation*}
Conversely, let $M=\sqbrac{M_{ij}}$ be an $m\times n$ stochastic matrix. Define $V\colon\rmprob (\Omega _1)\to\rmprob (\Omega _2)$ by
$V(\delta _{xi})=\sum\limits _{j=1}^m\mu _{ij}\delta _{y_j}$, $i=1,2,\ldots ,n$ and extend $V$ affinely to all of $\rmprob (\Omega _1)$. By our previous work, $\vtilde =M$.
\end{proof}

We conclude that in the finite case, every affine map $V\colon\rmprob (\Omega _1)\to\rmprob (\Omega _2)$ is a coarse-graining and is implemented by a unique stochastic matrix $\vtilde$. We then identify $V$ and $\vtilde$.

\section{Observables and Instruments}  % Section 2
In this section we employ our previous work to study coarse-graining of observables and instruments. Let $H$ be a complex Hilbert space that represents a quantum system $S$. We denote the set of bounded linear operators on $H$ by $\lscript (H)$. For $A,B\in\lscript (H)$, we write $A\le B$ if $\elbows{\phi ,A\phi}\le\elbows{\phi ,B\phi}$ for all $\phi\in H$. An operator $E\in\lscript (H)$ is an \textit{effect} if $0\le E\le I$ where $0,I$ are the zero and identity operators respectively. We denote the set of effects by $\escript (H)$ and interpret an $E\in\escript (H)$ as a \onezero (true-false) measurement \cite{bgl95,hz12,nc00}. If $(\Omega _A,\fscript )$ is a measurable space, an \textit{observable} with \textit{outcome space} $\Omega _A$ is an effect-valued measure $A\colon\fscript\to\escript (H)$ \cite{bgl95,hz12,nc00}. That is, $A(\cup\Delta _i)=\sum A(\Delta _i)$ when
$\Delta _i\cap\Delta _j=\emptyset$, $i\ne j$, and $A(\Omega )=I$. We interpret $A(\Delta )$ as the effect that occurs when a measurement of $A$ results in an outcome in $\Delta$. A \textit{state} for $S$ is an effect $\rho\in\escript (H)$ that satisfies $\rmtr (\rho )=1$. We denote the set of states on $H$ by $\sscript (H)$. If $\rho\in\sscript (H)$, $E\in\escript (H)$ we interpret $\rmtr (\rho E)$ as the probability that $E$ occurs (is true) when $S$ is in the state $\rho$. If $A$ is an observable, its statistics in the state $\rho$ is given by the \textit{distribution}
\begin{equation*}
\Phi _\rho ^A(\Delta )=\rmtr\sqbrac{\rho A(\Delta )}
\end{equation*}
for all $\Delta\in\fscript$. Of course, $\Phi _\rho ^A\in\rmprob (\Omega ,\fscript )$ for all $\rho\in\sscript (H)$ \cite{bgl95,hz12,nc00}.

We now discuss a method for constructing stochastic kernels from observables. Let $(\Omega ,\fscript )$, $(\Omega _A,\gscript )$ be measurable spaces, $\brac{\alpha _x\colon x\in\Omega}\subseteq\sscript (H)$ a collection of states and $A$ an observable with outcome space $\Omega _A$. We say that $(\alpha ,A)$ is \textit{measurable} if $x\mapsto\Phi _{\alpha _x}^A(\Delta )$ is measurable for all $\Delta\in\gscript$. If $(\alpha ,A)$ is measurable, we define the stochastic kernel
\begin{equation}                % equation (2.1)
\label{eq21}
v(x,\Delta )=\rmtr\sqbrac{\alpha _xA(\Delta )}=\Phi _{\alpha _x}^A(\Delta )
\end{equation}
with the corresponding coarse-graining
\begin{equation}                % equation (2.2)
\label{eq22}
V_{(\alpha ,A)}(\mu )(\Delta )=\int v(x,\Delta )\mu (dx)=\int\rmtr\sqbrac{\alpha _xA(\Delta )}\mu (dx)
   =\int\phi _{\alpha _x}^A(\Delta )\mu (dx)
\end{equation}
If $\alpha _x$ are pure states $\alpha _x=\ket{\phi _x}\bra{\phi _x}$, $\phi _x\in H$, then \eqref{eq21} and \eqref{eq22} become
\begin{align}               % equation (2.3) & equation (2.4)   
\label{eq23}
v(x,\Delta )&=\elbows{A(\Delta )\phi _x,\phi _x}
\intertext{and}
\label{eq24}
V_{(\alpha ,A)}(\mu )(\Delta )&=\int\elbows{A(\Delta )\phi _x,\phi _x}\mu (dx)
\end{align}
We interpret \eqref{eq21} as the probability that a measurement of $A$ results in an outcome in $\Delta$ when $S$ is in the state $\alpha _x$.

\begin{exam}{3}  % Example 3
Let $\Omega =\brac{1,2,\ldots ,n}$ be a finite measurable space. We show that any stochastic matrix $M=\sqbrac{\mu _{ij}}$, $i,j=1,2,\ldots ,n$ can be written in the form of the previous paragraph. Let $H$ be a complex Hilbert space with dimension $n$ and let $\brac{\phi _i\colon i=1,2,\ldots ,n}$ be an orthonormal basis for $H$. Let $A$ be the observable with outcome space $\Omega$ satisfying
\begin{equation*}
A\paren{\brac{j}}=\rmdiag\sqbrac{\mu _{1j},\mu _{2j},\ldots ,\mu _{nj}}
\end{equation*}
Letting $\alpha _i$ be the pure state $\alpha _i=\ket{\phi _i}\bra{\phi _i}$, $i=1,2,\ldots ,n$, we obtain
\begin{equation*}
\elbows{A\paren{\brac{j}}\phi _i,\phi _i}=\mu _{ij}
\end{equation*}
This is essentially \eqref{eq23}.\hfill\qedsymbol
\end{exam}

We now give an application of the previous structure to the study of the dynamics of the system $S$. Suppose the dynamics of $S$ is described by the strongly continuous unitary group $e^{-itK}$, $t\in\sqparen{0,\infty}$, where $K$ is the Hamiltonian for $S$. If $\phi _0\in H$ is the initial state, then
$\phi _t=e^{-itK}\phi _0$ is the state at time $t\in\sqparen{0,\infty}$. We can consider $\phi _t$ as a collection of states indexed by the points of the measurable space, $\paren{\sqparen{0,\infty},\bscript\paren{\sqparen{0,\infty}}}$. Let $A$ be an observable with outcome space $(\Omega _A,\fscript )$. Since $t\mapsto\phi _t$ is continuous we have that
\begin{equation}                % equation (2.5)
\label{eq25}
t\mapsto\Phi _{\phi _t}^A(\Delta )=\elbows{A(\Delta )\phi _t,\phi _t}
\end{equation}
is continuous for all $\Delta\in\fscript$. It follows that $(\phi _t,A)$ is measurable. We conclude that the map
$v\colon\sqparen{0,\infty}\times\fscript\to\sqbrac{0,1}$ given by $v(t,\Delta )=\elbows{A(\Delta )\phi _t,\phi _t}$ is a stochastic kernel called the
\textit{dynamical kernel} for $(\phi _t,A)$. We interpret $v(t,\Delta )$ as the probability that a measurement of $A$ at time $t$ results in an outcome in
$\Delta$. In terms of the dynamical group we have
\begin{equation}                % equation (2.6)
\label{eq26}
v(t,\Delta )=\elbows{A(\Delta )e^{-itK}\phi _0,e^{-itK}\phi _0}=\elbows{e^{itK}A(\Delta )e^{-itk}\phi _0,\phi _0}
\end{equation}
The observable $\Delta\mapsto e^{itK}A(\Delta )e^{-itK}$ which gives the time evolution of $A$ is the \textit{Heisenberg picture} of quantum mechanics while \eqref{eq25} gives the \textit{Schr\"odinger picture}. The corresponding coarse-graining map
\begin{align*}
&V_{(\phi ,A)}\colon\rmprob\paren{\sqparen{0,\infty},\bscript\paren{\sqparen{0,\infty}}}\to\rmprob (\Omega _A,\fscript )
\intertext{satisfies}
&V_{(\phi ,A)}(\mu )(\Delta )=\int\elbows{A(\Delta )\phi _t,\phi _t}\mu (dt)=\int\elbows{e^{itK}A(\Delta )e^{-itK}\phi _0,\phi _0}\mu (dt)
\end{align*}
For a particular time $t_0\in\sqparen{0,\infty}$ we have
\begin{equation*} 
V_{(\phi ,A)}(\delta _{t_0})(\Delta )=v(t_0,\Delta )=\elbows{e^{it_0K}A(\Delta )e^{-it_0K}\phi _0,\phi _0}
\end{equation*}

Let $A$ be an observable with outcome space $(\Omega _A,\fscript )$ and let $(\Omega ,\gscript )$ be a measurable space. If
$v\colon\Omega _A\times\gscript$ is a stochastic kernel, we define the observable $V\tbullet A$ with outcome space $\Omega$ by
\begin{equation*} 
V\tbullet A(\Delta )=\int v(x,\Delta )A(dx),\hbox{ for all }\Delta\in\gscript
\end{equation*}
We call $v$ the \textit{stochastic kernel} for $V$ and $V\tbullet A$ is a \textit{coarse-graining} of $A$ \cite{bgl95,hz12}. We see that $V\tbullet A(\Delta )$ is the unique effect satisfying
\begin{equation*} 
\rmtr\sqbrac{\rho V\tbullet A(\Delta )}=\int v(x,\Delta )\rmtr\sqbrac{\rho A(dx)}
\end{equation*}
for all $\rho\in\sscript (H)$. We now show that this idea extends to observables.

\begin{lem}    % Lemma 2.1
\label{lem21}
$V\tbullet A$ is the unique observable with distribution
\begin{equation*} 
\Phi _\rho ^{V\tbullet A}(\Delta )=V\sqbrac{\Phi _\rho ^A}(\Delta )
\end{equation*}
\end{lem}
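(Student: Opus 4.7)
The plan is in two short parts: first establish the distribution formula by direct computation from the defining characterization of $V\tbullet A$, and then derive uniqueness from the fact that states separate effects.

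For the existence half, I start from the weak characterization of $V\tbullet A$ recorded just before the lemma, namely
\[
\rmtr\sqbrac{\rho V\tbullet A(\Delta )}=\int v(x,\Delta )\rmtr\sqbrac{\rho A(dx)}
\]
for every $\rho\in\sscript (H)$ and $\Delta\in\gscript$. The left side is $\Phi _\rho ^{V\tbullet A}(\Delta )$ by definition of the distribution, and on the right $\rmtr\sqbrac{\rho A(dx)}=\Phi _\rho ^A(dx)$, so the identity becomes
\[
\Phi _\rho ^{V\tbullet A}(\Delta )=\int v(x,\Delta )\Phi _\rho ^A(dx).
\]
But this is precisely the coarse-graining $V\sqbrac{\Phi _\rho ^A}(\Delta )$ from Section~1 applied to the probability measure $\mu =\Phi _\rho ^A$ via the formula in \eqref{eq22}. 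That gives the claimed formula.

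For uniqueness, suppose an observable $B\colon\gscript\to\escript (H)$ with outcome space $\Omega$ also satisfies $\Phi _\rho ^B(\Delta )=V\sqbrac{\Phi _\rho ^A}(\Delta )$ for every $\rho\in\sscript (H)$ and $\Delta\in\gscript$. Combined with the formula just proved, this gives $\rmtr\sqbrac{\rho B(\Delta )}=\rmtr\sqbrac{\rho V\tbullet A(\Delta )}$ for every state $\rho$. Restricting to pure states $\rho =\ket{\phi}\bra{\phi}$ with $\phi$ ranging over the unit sphere of $H$ yields $\elbows{\phi ,B(\Delta )\phi}=\elbows{\phi ,V\tbullet A(\Delta )\phi}$ for every unit vector $\phi$, and since $B(\Delta )$ and $V\tbullet A(\Delta )$ are self-adjoint effects, polarization forces $B(\Delta )=V\tbullet A(\Delta )$.

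I do not expect any serious obstacle: the measure-theoretic content, namely that $\int v(x,\Delta )A(dx)$ really does define an effect through its expectation values, has already been absorbed into the definition of $V\tbullet A$ stated immediately before the lemma. Once that is available, the proof reduces to unpacking the definitions of $\Phi _\rho$ and of $V$, together with the standard fact that two bounded self-adjoint operators with matching pure-state expectations coincide.
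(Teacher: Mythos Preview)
Your proposal is correct and follows essentially the same route as the paper: unwind the definition of $\Phi_\rho^{V\tbullet A}$, pass the trace through the integral to land on $\int v(x,\Delta)\Phi_\rho^A(dx)=V[\Phi_\rho^A](\Delta)$, and then invoke the fact that an observable is determined by its distributions. The only cosmetic difference is that for uniqueness the paper simply cites this last fact from the literature, whereas you spell it out via pure states and polarization.
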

\begin{proof}
For all $\rho\in\sscript (H)$, $\Delta\in\gscript$ we obtain
\begin{align*}
\Phi _\rho ^{V\tbullet A}(\Delta )&=\rmtr\sqbrac{\rho (V\tbullet A)(\Delta )}=\rmtr\sqbrac{\rho\int v(x,\Delta )A(dx)}
   =\int v(x,\Delta )\rmtr\sqbrac{\rho A(dx)}\\
   &=\int v(x,\Delta )\Phi _\rho ^A(dx)=V\sqbrac{\Phi _\rho ^A}(\Delta )
\end{align*}
The observable $V\tbullet A$ is unique because two observables on $H$ with the same distributions for every $\rho\in\sscript (H)$ are identical
\cite{bgl95,hz12,nc00}.
\end{proof}

If $A_i$, $i=1,2,\ldots ,n$, are observables on $H$ with the same outcome set and $0\le\lambda _i\le 1$, $\sum\lambda _i=1$, it is clear that
$\sum\lambda _iA_i$ is again an observable. Thus, such observables form a convex set. We conclude that $A\mapsto V\tbullet A$ is an affine map because
\begin{align*} 
V\tbullet\paren{\sum\lambda _iA_i}(\Delta )&=\int v(x,\Delta )\sum\lambda _iA(dx)=\sum\lambda _i\int v(x,\Delta )A_i(dx)\\
   &=\sum\lambda _i(V\tbullet A_i)(\Delta )
\end{align*}

Let $(\Omega ,\fscript )$, $(\Omega _A,\gscript )$ be measurable spaces and let $(\alpha ,A)$ be measurable with corresponding stochastic kernel
$v(x,\Delta )$ and coarse-graining $V_{(\alpha ,A)}$ given by \eqref{eq21} and \eqref{eq22}. If $B$ is an observable with outcome space $\Omega$ we obtain the following result.

\begin{lem}    % Lemma 2.2
\label{lem22}
{\rm{(a)}}\enspace For all $\Delta\in\gscript$ we have that
\begin{equation*}
(V_{(\alpha ,A)}\tbullet B)(\Delta )=\int\Phi _{\alpha _x}^A(\Delta )B(dx)
\end{equation*}
{\rm{(b)}}\enspace For all $\rho\in\sscript (H)$, $\Delta\in\gscript$, we have that
\begin{equation*}
\Phi _\rho^{V_{(\alpha ,A)\tbullet B}}(\Delta )=\int\Phi _{\alpha _x}^A(\Delta )\Phi _\rho ^B(dx)
\end{equation*}
\end{lem}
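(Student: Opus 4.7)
The plan is to prove both parts by unpacking the relevant definitions and invoking Lemma~\ref{lem21}; no serious obstacle is anticipated.

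For part (a), I would simply specialize the definition $V\tbullet A(\Delta)=\int v(x,\Delta)A(dx)$ (stated just before Lemma~\ref{lem21}) to the case where the coarse-graining is $V_{(\alpha,A)}$ and the observable is $B$. By equation~\eqref{eq21}, the stochastic kernel attached to $V_{(\alpha,A)}$ is $v(x,\Delta)=\Phi_{\alpha_x}^A(\Delta)$, so substituting this into the definition produces the claimed identity
\begin{equation*}
(V_{(\alpha,A)}\tbullet B)(\Delta)=\int v(x,\Delta)B(dx)=\int \Phi_{\alpha_x}^A(\Delta)B(dx).
\end{equation*}

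For part (b), I see two clean routes, and I would take the one that leverages Lemma~\ref{lem21}. That lemma yields $\Phi_\rho^{V_{(\alpha,A)}\tbullet B}(\Delta)=V_{(\alpha,A)}[\Phi_\rho^B](\Delta)$, and then expanding $V_{(\alpha,A)}$ applied to the probability measure $\Phi_\rho^B$ via \eqref{eq22} directly gives $\int \Phi_{\alpha_x}^A(\Delta)\Phi_\rho^B(dx)$. As a cross-check, I would also derive (b) from (a) by taking the trace against $\rho$, pulling the trace through the integral, and using $\rmtr[\rho B(dx)]=\Phi_\rho^B(dx)$ to recover the same formula.

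The only subtlety worth flagging is the interchange of $\rmtr$ with the integral in the direct derivation of (b) from (a); this is standard because the operator-valued integral defining $V\tbullet B(\Delta)$ is characterized weakly through its trace pairings with states, so the interchange is built into the definition rather than something to be independently justified. Beyond that, the lemma is essentially bookkeeping between the two formalisms---coarse-graining of measures versus coarse-graining of observables---already developed in Section~1 and the opening of Section~2.
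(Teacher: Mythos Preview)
Your proposal is correct and matches the paper's argument: part~(a) is identical, and for part~(b) the paper takes your ``cross-check'' route---applying (a) and then pulling $\rmtr$ through the integral---rather than invoking Lemma~\ref{lem21}, but both derivations are immediate from the definitions and you explicitly mention each. Your remark about the trace--integral interchange being built into the weak definition of $V\tbullet B(\Delta)$ is exactly how the paper treats it as well.
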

\begin{proof}
(a)\enspace Since $v(x,\Delta )=\Phi _{\alpha _x}^A(\Delta )$ for all $\Delta\in\gscript$, we obtain
\begin{equation*}
(V_{(\alpha ,A)}\tbullet B)(\Delta )=\int v(x,\Delta )B(dx)=\int\Phi _{\alpha _x}^A(\Delta )B(dx)
\end{equation*}
(b)\enspace For all $\rho\in\sscript (H)$, $\Delta\in\gscript$, applying (a) we obtain
\begin{align*}
\Phi _\rho ^{V_{(\alpha ,A)\tbullet B}}(\Delta )&=\rmtr\sqbrac{\rho (V_{(\alpha ,A)}\tbullet B)(\Delta )}
   =\rmtr\sqbrac{\rho\int\Phi _{\alpha _x}^A(\Delta )B(dx)}\\
   &=\int\Phi _{\alpha _x}^A(\Delta )\rmtr\sqbrac{\rho B(dx)}=\int\Phi _{\alpha _x}^A(\Delta )\Phi _\rho ^B(dx)\qedhere
\end{align*}
\end{proof}

An observable $B$ is \textit{part} of an observable $A$ if there exists a measurable surjection $f\colon\Omega _A\to\Omega _B$ such that
$B=V_f\tbullet A$ \cite{fhl18,gud120,gud220}.

\begin{lem}    % Lemma 2.3
\label{lem23}
Let $A,B$ be observables on $H$ with outcome spaces $(\Omega _A,\fscript _A)$, $(\Omega _B,\fscript _B)$ respectively. Then $B$ is part of $A$ if and only if there is a measurable surjection $f\colon\Omega _A\to\Omega _B$ such that $B(\Delta )=A\sqbrac{f^{-1}(\Delta )}$ for all
$\Delta\in\fscript _B$.
\end{lem}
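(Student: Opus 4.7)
The plan is to just unfold the definition of ``part of'' for observables and apply Lemma~\ref{lem13} directly. By definition, $B$ is part of $A$ iff there is a measurable surjection $f\colon\Omega _A\to\Omega _B$ with $B=V_f\tbullet A$, so the whole statement reduces to identifying the right-hand side $V_f\tbullet A(\Delta )$ with $A\sqbrac{f^{-1}(\Delta )}$ for every $\Delta\in\fscript _B$.

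First I would invoke Lemma~\ref{lem13}, which tells us the stochastic kernel associated with $V_f$ is exactly $v(x,\Delta )=\chi _{f^{-1}(\Delta )}(x)$. Plugging this into the definition of $V\tbullet A$ from the paragraph preceding Lemma~\ref{lem21}, I get
\begin{equation*}
(V_f\tbullet A)(\Delta )=\int v(x,\Delta )A(dx)=\int\chi _{f^{-1}(\Delta )}(x)A(dx)=A\sqbrac{f^{-1}(\Delta )},
\end{equation*}
where the last step is the standard fact that integrating an indicator of a measurable set against an effect-valued measure reproduces the measure evaluated at that set (this follows, for instance, by testing against arbitrary states $\rho$ and reducing to the corresponding scalar identity for $\Phi _\rho ^A$, using that two effects agreeing on every state are equal).

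Given this computation, both directions of the equivalence are immediate. If $B$ is part of $A$, choose the witnessing $f$ and the displayed identity gives $B(\Delta )=A\sqbrac{f^{-1}(\Delta )}$ for all $\Delta\in\fscript _B$. Conversely, if such an $f$ exists, the same computation shows $B(\Delta )=(V_f\tbullet A)(\Delta )$ for every $\Delta$, so $B=V_f\tbullet A$ and $B$ is part of $A$.

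There is no real obstacle here; the only point worth being careful about is justifying that $\int\chi _{f^{-1}(\Delta )}(x)A(dx)=A\sqbrac{f^{-1}(\Delta )}$ in the effect-valued setting, but this is just the operator-valued version of $\int\chi _EdA=A(E)$ and can be verified by pairing against states to reduce it to the scalar case already used throughout Section~1.
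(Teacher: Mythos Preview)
Your proof is correct and follows essentially the same approach as the paper: both identify the stochastic kernel of $V_f$ via Lemma~\ref{lem13} as $\chi_{f^{-1}(\Delta)}$, compute $(V_f\tbullet A)(\Delta)=\int\chi_{f^{-1}(\Delta)}(x)\,A(dx)=A\sqbrac{f^{-1}(\Delta)}$, and then read off both implications directly. Your added remark justifying the indicator integral by pairing against states is a slight elaboration beyond what the paper writes, but the argument is otherwise the same.
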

\begin{proof}
If $B$ is a part of $A$, there exists a measurable surjection $f\colon\Omega _A\to\Omega _B$ such that $B=V_f\tbullet A$. If
$v(x,\Delta )=\chi _{f^{-1}(\Delta )}(x)$ is the corresponding stochastic kernel, then for $\Delta\in\fscript _B$ we obtain
\begin{align*}
B(\Delta )&=(V_f\tbullet A)(\Delta )=\int v(x,\Delta )A(dx)=\int\chi _{f^{-1}(\Delta )}(x)A(dx)\\
   &=\int _{f^{-1}(\Delta )}A(dx)=A\sqbrac{f^{-1}(\Delta )}
\end{align*}
Conversely, if $B(\Delta )=A\sqbrac{f^{-1}(\Delta )}$ for all $\Delta\in\fscript _B$, then letting $v(x,\Delta )=\chi _{f^{-1}(\Delta )}(x)$ we obtain
$B(\Delta )=(V_f\tbullet A)(\Delta )$ by reversing the previous argument. Hence, $B=V_f\tbullet A$ so $B$ is part of $A$.
\end{proof}

By Lemma~\ref{lem21} if $B$ is part of $A$ so that $B=V_f\tbullet A$, then $\Phi _\rho ^B=V_f(\Phi _\rho ^A)$ and hence $\Phi _\rho ^B$ is part of
$\Phi _\rho ^A$ for all $\rho\in\sscript (H)$. Two observables $B,C$ \textit{coexist} if there exists an observable $A$ such that $B$ and $C$ are part of $A$ \cite{bgl95,hz12,hrsz09,lah03}. It is well-known that unlike in Lemma~\ref{lem15}, two observables need not coexist \cite{bgl95,hz12,lah03}. Let $A$ be an observable with outcome space $(\Omega _A,\fscript )$. If $V$ is a discretization of $(\Omega _A,\fscript )$, we call $V\tbullet A$ a
\textit{discretization} of $A$ \cite{hz12}. If $v(x,\brac{i})=\chi _{B_i}(x)$ is the corresponding stochastic kernel we obtain
\begin{equation}                % equation (2.7)
\label{eq27}
(V\tbullet A)(\brac{i})=\int v(x,\brac{i})A(dx)=\int\chi _{_{B_i}}(x)A(dx)=\int _{B_i}A(dx)=A(B_i)
\end{equation}
Moreover,
\begin{equation*}
(V\tbullet A)(\Delta )=\sum _{i\in\Delta}(V\tbullet A)(\brac{i})=\sum\brac{A(B_i)\colon i\in\Delta}
\end{equation*}

\begin{lem}    % Lemma 2.4
\label{lem24}
If $V\tbullet A$ is a discretization of $A$, then $V\tbullet A$ is a part of $A$.
\end{lem}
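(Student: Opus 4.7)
The plan is to exhibit an explicit measurable surjection $f\colon\Omega_A\to\Omega_1=\brac{1,2,\ldots,n}$ witnessing that $V\tbullet A$ is a part of $A$, and then invoke Lemma~\ref{lem23}. Since $V$ is a discretization, it is determined by a measurable partition $B_1,B_2,\ldots,B_n$ of $\Omega_A$, and equation~\eqref{eq27} tells us $(V\tbullet A)(\brac{i})=A(B_i)$. The natural candidate is to set $f(x)=i$ precisely when $x\in B_i$; this is unambiguous because the $B_i$ are disjoint and cover $\Omega_A$. (If some $B_i$ happens to be empty we may harmlessly drop that index from $\Omega_1$, since the corresponding effect $A(B_i)=0$ contributes nothing to $V\tbullet A$, so we may assume surjectivity without loss of generality.)

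Next I would check that $f$ is measurable. For any $\Delta\in 2^{\Omega_1}$,
\begin{equation*}
f^{-1}(\Delta)=\bigcup_{i\in\Delta}B_i\in\fscript,
\end{equation*}
because each $B_i\in\fscript$ and the union is finite. So $f$ is a measurable surjection from $(\Omega_A,\fscript)$ to $(\Omega_1,2^{\Omega_1})$.

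The final step is to verify the identity $(V\tbullet A)(\Delta)=A\sqbrac{f^{-1}(\Delta)}$ for every $\Delta\in 2^{\Omega_1}$, at which point Lemma~\ref{lem23} concludes that $V\tbullet A$ is a part of $A$. Using the $\sigma$-additivity of the effect-valued measure $A$ on the disjoint union above together with \eqref{eq27}, we get
\begin{equation*}
A\sqbrac{f^{-1}(\Delta)}=A\paren{\bigcup_{i\in\Delta}B_i}=\sum_{i\in\Delta}A(B_i)=\sum_{i\in\Delta}(V\tbullet A)(\brac{i})=(V\tbullet A)(\Delta).
\end{equation*}

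I do not anticipate a serious obstacle here; everything reduces to recognising the partition $\brac{B_i}$ as encoding a measurable classifying map $f$. The only subtlety is the surjectivity requirement in the definition of ``part,'' which is dealt with by the parenthetical remark above about discarding empty $B_i$'s.
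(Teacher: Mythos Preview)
Your proof is correct and follows essentially the same route as the paper: define $f(x)=i$ when $x\in B_i$, use \eqref{eq27} to get $(V\tbullet A)(\brac{i})=A(B_i)=A\sqbrac{f^{-1}(\brac{i})}$, extend by additivity to all $\Delta\subseteq\Omega_1$, and conclude via Lemma~\ref{lem23}. You are actually slightly more careful than the paper, which does not explicitly check measurability of $f$ or address the surjectivity requirement.
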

\begin{proof}
Let $V\colon\rmprob (\Omega _A,\fscript )\to\rmprob (\Omega _1)$ where $\Omega _1=\brac{1,2,\ldots ,n}$ so the outcome space of $V\tbullet A$ is
$\Omega _1$. Let $v(x,\brac{i})=\chi _{_{B_i}}(x)$ be the corresponding stochastic kernel. Define $f\colon\Omega _A\to\Omega _1$ by $f(x)=i$ if
$x\in B_i$. Then by \eqref{eq27}
\begin{equation*}
(V\tbullet A)(\brac{i})=A(B_i)=A\sqbrac{f^{-1}(\brac{i})}
\end{equation*}
and it follows that for all $\Delta\subseteq\Omega _1$ we obtain
\begin{equation*}
(V\tbullet A)(\Delta )=\sum _{i\in\Delta}(V\tbullet A)(\brac{i})=A\sqbrac{f^{-1}(\Delta )}
\end{equation*}
Hence, $V\tbullet A$ is part of $A$.
\end{proof}

\begin{cor}    % Corollary 2.5
\label{cor25}
Any two discretizations  of an observable coexist.
\end{cor}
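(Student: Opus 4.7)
The plan is extremely short because nearly all the work has already been done in Lemma~\ref{lem24}. Let $A$ be an observable with outcome space $(\Omega _A,\fscript )$, and suppose $V_1\tbullet A$ and $V_2\tbullet A$ are two discretizations of $A$. I would simply invoke Lemma~\ref{lem24} twice: it tells me that $V_1\tbullet A$ is part of $A$ and that $V_2\tbullet A$ is part of $A$. By the definition of coexistence given just before Lemma~\ref{lem24} (``two observables $B,C$ coexist if there exists an observable $A$ such that $B$ and $C$ are part of $A$''), the common observable witnessing coexistence is $A$ itself, so $V_1\tbullet A$ and $V_2\tbullet A$ coexist.

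There is really no obstacle to overcome here; the corollary is a formal consequence of Lemma~\ref{lem24} together with the definition of coexistence. The only subtlety worth a sentence of commentary is that the two discretizations generally have different outcome spaces (corresponding to different measurable partitions $\{B_i^{(1)}\}$ and $\{B_j^{(2)}\}$ of $\Omega _A$), so one should note that the definition of coexistence does not require the two observables $B,C$ to share an outcome space, only that each be a part of a common $A$ through its own measurable surjection. Lemma~\ref{lem24} supplies precisely these two surjections $f_1\colon\Omega _A\to\Omega _1$ and $f_2\colon\Omega _A\to\Omega _2$ defined by $f_k(x)=i$ if $x\in B_i^{(k)}$, which is all that is needed.

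Thus the proof I would write is essentially one line: apply Lemma~\ref{lem24} to each of the two discretizations; the common observable $A$ witnesses their coexistence.
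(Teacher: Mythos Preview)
Your proposal is correct and matches the paper's approach exactly: the paper states Corollary~\ref{cor25} immediately after Lemma~\ref{lem24} with no proof, treating it as the obvious consequence you describe. There is nothing to add.
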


Let $\tscript (H)$ be the set of trace-class operators on $H$. An \textit{operation} on $H$ is a trace non-increasing, completely positive linear map
$T\colon\tscript (H)\to\tscript (H)$ \cite{bgl95,hz12,hrsz09,nc00}. If an operation $T$ preserves the trace, then $T$ is called a \textit{channel} on $H$. An \textit{instrument} on $H$ with \textit{outcome space} $\Omega _\iscript$ is an operation-valued measure $\iscript$ on $(\Omega _\iscript ,\fscript )$ such that $\iscript (\Omega _\iscript )$ is a channel. The statistics of an instrument $\iscript$ for a state $\rho\in\sscript (H)$ is given by its
\textit{distribution}
\begin{equation*}
\Phi _\rho ^\iscript (\Delta )=\rmtr\sqbrac{\iscript (\Delta )(\rho )}
\end{equation*}
for all $\Delta\in\fscript$. Of course, $\Phi _\rho ^\iscript$ is a probability measure on $(\Omega _\iscript ,\fscript )$. We say that an instrument $\iscript$ \textit{measures} an observable $A$ if $\Omega _A=\Omega _\iscript$ and for all $\rho\in\sscript (H)$ and $\Delta\in\fscript$ we have
\begin{equation*}
\Phi _\rho ^A(\Delta )=\rmtr\sqbrac{\rho A(\Delta )}=\rmtr\sqbrac{\iscript (\Delta )(\rho )}=\Phi _\rho ^\iscript (\Delta )
\end{equation*}
It can be shown that an instrument measures a unique observable, but an observable is measured by many instruments \cite{hz12}. If $\iscript$ measures $A$ we write $\iscripthat =A$. We think of $\iscript$ as an apparatus that can be employed to measure the observable $\iscripthat$ and conclude that there are many such apparatuses. Although $\iscript$ reproduces the statistics of $\iscripthat$, $\iscript$ gives more information than
$\iscripthat$. This is because when a measurement of $\iscript$ produces a result in $\Delta\in\fscript$ the instrument $\iscript$ updates the state of the system to the new state $\iscript (\Delta )\rho/\rmtr\sqbrac{\iscript (\Delta )\rho}$ when $\rmtr\sqbrac{\iscript (\Delta )\rho}\ne 0$ \cite{bgl95,hz12,nc00}.

If $\iscript$ is an instrument on $(\Omega _\iscript ,\fscript )$ and $v\colon\Omega _\iscript\times\gscript\to\sqbrac{0,1}$ is a stochastic kernel, then we shall show that
\begin{equation*}
(V\tbullet\iscript )(\Delta )=\int v(x,\Delta )\iscript (dx)
\end{equation*}
is an instrument with outcome space $(\Omega ,\gscript )$ called a \textit{coarse-graining} of $\iscript$. To show this we have that
$(V\tbullet\iscript )(\Delta )$ is countably additive on $\gscript$ and
\begin{equation*}
(V\tbullet\iscript )(\Omega )=\int v(x,\Omega )\iscript (dx)=\int _{\Omega _\iscript}\iscript (dx)=\iscript (\Omega _\iscript )
\end{equation*}
so $(V\tbullet\iscript )(\Omega )$ is a channel. Moreover, if $\rho\in\sscript (H)$ we obtain
\begin{align*}
\rmtr\sqbrac{(V\tbullet\iscript )(\Delta )\rho}&=\rmtr\sqbrac{\int v(x,\Delta )\iscript (dx)(\rho )}=\int v(x,\Delta )\rmtr\sqbrac{\iscript (dx)\rho}\\
   &\le\int\rmtr\sqbrac{\iscript (dx)\rho}=\rmtr\sqbrac{\iscript (\Omega _\iscript )\rho}=\rmtr (\rho )
\end{align*}
It follows that $V\tbullet\iscript$ is an instrument. It is easy to check that instruments form a convex set and that $\iscript\mapsto V\tbullet\iscript$ is affine.

\begin{thm}    % Theorem 2.6
\label{thm26}
{\rm{(a)}}\enspace $(V\tbullet\iscript )^\wedge =V\tbullet\iscripthat$.
{\rm{(b)}}\enspace For instruments $\iscript ,\jscript$ we have that $\Phi _\rho ^\jscript =V(\Phi _\rho ^\iscript )$ for all $\rho\in\sscript (H)$ if and only if
$\jscript =V\tbullet\iscripthat$.
{\rm{(c)}}\enspace If $\jscript =V\tbullet\iscript$, then $\Phi _\rho ^\jscript =V(\Phi _\rho ^\iscript )$ for all $\rho\in\sscript (H)$.
\end{thm}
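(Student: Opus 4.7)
The plan is to prove (a) first and then derive (b) and (c) from it using Lemma~\ref{lem21} together with the fact (cited at the start of Section~2) that an observable is uniquely determined by the family of its distributions $\{\Phi_\rho\}_{\rho\in\sscript(H)}$. The identity that drives everything is $\Phi_\rho^\iscript=\Phi_\rho^{\iscripthat}$, which holds by the very definition of ``$\iscript$ measures $\iscripthat$''.

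For (a) I would match distributions in an arbitrary state. Given $\rho\in\sscript(H)$ and $\Delta\in\gscript$, the definition of $V\tbullet\iscript$ and the integral--trace interchange already used just above the theorem (when verifying $V\tbullet\iscript$ is an instrument) give
\begin{equation*}
\Phi_\rho^{(V\tbullet\iscript)^\wedge}(\Delta)
=\rmtr\sqbrac{\int v(x,\Delta)\,\iscript(dx)(\rho)}
=\int v(x,\Delta)\,\Phi_\rho^{\iscripthat}(dx).
\end{equation*}
Applying Lemma~\ref{lem21} to the observable $\iscripthat$, the right-hand side equals $V(\Phi_\rho^{\iscripthat})(\Delta)=\Phi_\rho^{V\tbullet\iscripthat}(\Delta)$. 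Since the distributions of the two observables agree in every state, $(V\tbullet\iscript)^\wedge=V\tbullet\iscripthat$. Part (c) then follows at once: if $\jscript=V\tbullet\iscript$ then (a) gives $\jscripthat=V\tbullet\iscripthat$, and a second appeal to Lemma~\ref{lem21} yields $\Phi_\rho^\jscript=\Phi_\rho^{\jscripthat}=V(\Phi_\rho^{\iscripthat})=V(\Phi_\rho^\iscript)$.

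For (b) I would read the right-hand condition as $\jscripthat=V\tbullet\iscripthat$, which is the only sensible interpretation since $V\tbullet\iscripthat$ is an observable. The calculation in (a) already shows $V(\Phi_\rho^\iscript)(\Delta)=\Phi_\rho^{V\tbullet\iscripthat}(\Delta)$ for all $\rho,\Delta$, so the condition ``$\Phi_\rho^\jscript=V(\Phi_\rho^\iscript)$ for every $\rho$'' becomes ``$\Phi_\rho^{\jscripthat}=\Phi_\rho^{V\tbullet\iscripthat}$ for every $\rho$'', which by the uniqueness principle for observables is equivalent to $\jscripthat=V\tbullet\iscripthat$. The only delicate ingredient anywhere is the Fubini-style interchange of trace and integral in (a), but this is the same step already exploited earlier in the section (for instance in Lemma~\ref{lem22} and in the paragraph preceding the theorem), so no new analytic obstacle arises; the entire argument is bookkeeping around Lemma~\ref{lem21} and observable-uniqueness.
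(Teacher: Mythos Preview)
Your proof is correct and follows essentially the same route as the paper's: both arguments establish (a) by the trace--integral interchange and the identity $\Phi_\rho^\iscript=\Phi_\rho^{\iscripthat}$, then deduce (c) from (a), and handle (b) via the uniqueness of observables from their distributions (indeed, the paper's proof of (b) likewise concludes $\jscripthat=V\tbullet\iscripthat$, confirming your reading of the statement). The only cosmetic difference is that you package the key computation through Lemma~\ref{lem21} where the paper writes out the integral identities by hand; the logical content is identical.
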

\begin{proof}
(a)\enspace For all $\rho\in\sscript (H)$ we obtain
\begin{align*}
\rmtr\sqbrac{\rho (V\tbullet\iscripthat )(\Delta )}&=\rmtr\sqbrac{\rho\int v(x,\Delta )\iscripthat (dx)}=\int v(x,\Delta )\rmtr\sqbrac{\rho\iscripthat (dx)}\\
   &=\int v(x,\Delta )\rmtr\sqbrac{\iscript (dx)(\rho )}=\rmtr\sqbrac{\int v(x,\Delta )\iscript (dx)(\rho )}\\
   &=\rmtr\sqbrac{(V\tbullet\iscript )(\Delta )(\rho )}=\rmtr\sqbrac{\rho (V\tbullet\iscript )^\wedge (\Delta )}
\end{align*}
It follows that $(V\tbullet\iscript )^\wedge =V\tbullet\iscripthat$.
(b)\enspace If $\Phi _\rho ^\jscript =V(\Phi _\rho ^\iscript )$, then for all $\rho\in\sscript (H)$ we have that
\begin{align*}
\rmtr\sqbrac{\rho\jscripthat (\Delta )}&=\rmtr\sqbrac{\jscript (\rho )(\Delta )}=\Phi _\rho ^\jscript (\Delta )=V(\Phi _\rho ^\iscript )(\Delta )
   =\int v(x,\Delta )\Phi _\rho ^\iscript (dx)\\
   &=\int v(x,\Delta )\rmtr\sqbrac{\iscript (\rho )(dx)}=\int v(x,\Delta )\rmtr\sqbrac{\rho\iscripthat (dx)}\\
   &\rmtr\sqbrac{\rho\int v(x,\Delta )\iscripthat (dx)}=\rmtr\sqbrac{\rho V\tbullet\iscripthat (\Delta )}
\end{align*}
Therefore, $\jscripthat (\Delta )=V\tbullet\iscripthat (\Delta )$ for all $\Delta$ so $\jscripthat =V\tbullet\iscripthat$. Conversely, if $\jscripthat =V\tbullet\iscripthat$, then for all $\rho\in\sscript (H)$ we obtain
\begin{align*}
\Phi _\rho ^\jscript (\Delta )&=\rmtr\sqbrac{\rho\jscripthat (\Delta )}=\rmtr\sqbrac{\rho V\tbullet\iscripthat (\Delta )}
   =\rmtr\sqbrac{\rho\int v(x,\Delta )\iscripthat (dx)}\\
   &=\int v(x,\Delta )\rmtr\sqbrac{\rho\iscripthat (dx)}=\int v(x,\Delta )\rmtr\sqbrac{\iscript (\rho )(dx)}=\int v(x,\Delta )\Phi _\rho ^\iscript (dx)\\
   &=V(\Phi _\rho ^\iscript )(\Delta )
\end{align*}
Hence, $\Phi _\rho ^\jscript =V(\Phi _\rho ^\iscript )$.
(c)\enspace If $\jscript =V\tbullet\iscript$, then by (a) $\jscripthat =(V\tbullet\iscript )^\wedge =V\tbullet\iscripthat$. Applying (b) gives
$\Phi _\rho ^\jscript =V(\Phi _\rho ^\iscript )$ for all $\rho\in\sscript (H)$.
\end{proof}

The converse of Theorem~\ref{thm26}(c) does not hold. That is, if $\Phi _\rho ^\jscript =V(\Phi _\rho ^\iscript )$ for all $\rho\in\sscript (H)$, we need not have $\jscript =V\tbullet\iscript$. For example, let $V=I$ the identity map. Then $\Phi _\rho ^\jscript =\Phi _\rho ^\iscript$ for all $\rho\in\sscript (H)$. But there exist $\jscript\ne\iscript$ with $\Phi _\rho ^\jscript =\Phi _\rho ^\iscript$ for all $\rho\in\sscript (H)$ so $\jscript\ne I\tbullet\iscript =\iscript$. Applying Theorem~\ref{thm26}, we can consider the various special types of coarse-graining for instruments.

\section{Finite Observables}  % Section 3
In this section, we restrict our attention to finite observables. If $A$ is an observable with $\Omega _A=\brac{x_1,\ldots ,x_n}$, then $A$ is completely determined by
\begin{equation*}
\brac{A\paren{\brac{x_1}},A\paren{\brac{x_2}},\ldots ,A\paren{\brac{x_n}}}
\end{equation*}
We then define $A_x=A\paren{\brac{x}}$ and write $A=\brac{A_x\colon x\in\Omega _A}$. It follows that for all $\Delta\subseteq\Omega _A$ we have that $A(\Delta )=\sum\brac{A_x\colon x\in\Delta}$. Let $B=\brac{B_y\colon y\in\Omega _B}$ be another observable and let $V\colon\rmprob (\Omega _A)\to\rmprob (\Omega _B)$ be an affine map. We write $B=V\tbullet A$ if $\Phi _\rho ^B=V(\Phi _\rho ^A)$ for all $\rho\in\sscript (H)$. We then say that $B$ is a \textit{post-processing} of $A$ \cite{gud220,hz12}. Thus, post-processing is the same as coarse-graining for finite observables.

\begin{thm}    % Theorem 3.1
\label{thm31}
If $V\colon\rmprob (\Omega _A)\to\rmprob (\Omega _B)$ is affine, then $B=V\tbullet A$ if and only if
$B_y=\sum\limits _{x\in\Omega _A}\vtilde _{xy}A_x$ for all $y\in\Omega _B$ where $\vtilde _{xy}$ is the stochastic matrix corresponding to $V$.
\end{thm}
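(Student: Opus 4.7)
The plan is to unpack the definition of $B=V\tbullet A$ given in this section---namely $\Phi_\rho^B=V(\Phi_\rho^A)$ for every $\rho\in\sscript(H)$---and then reduce it to the desired operator identity by evaluating it one outcome at a time. First I would fix an arbitrary $y\in\Omega_B$ and evaluate both sides of the measure equation at the singleton $\brac{y}$. The left-hand side is $\Phi_\rho^B(y)=\rmtr(\rho B_y)$.

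For the right-hand side I would use that $\Phi_\rho^A$ is a finite probability measure to write $\Phi_\rho^A=\sum_x\rmtr(\rho A_x)\delta_x$, then push this through $V$ using affinity to obtain $V(\Phi_\rho^A)=\sum_x\rmtr(\rho A_x)\,V(\delta_x)$, and finally apply Theorem~\ref{thm17} in the form $\vtilde_{xy}=[V(\delta_x)](y)$. Evaluating at $\brac{y}$ produces $\sum_x\vtilde_{xy}\rmtr(\rho A_x)=\rmtr\sqbrac{\rho\sum_x\vtilde_{xy}A_x}$. Putting the two sides together, the condition $B=V\tbullet A$ becomes
\begin{equation*}
\rmtr(\rho B_y)=\rmtr\sqbrac{\rho\sum_x\vtilde_{xy}A_x}\qquad\text{for all }\rho\in\sscript(H),\ y\in\Omega_B.
\end{equation*}
Since a bounded operator on $H$ is uniquely determined by its trace pairings against all density operators (the same uniqueness invoked in the proof of Lemma~\ref{lem21}), this is equivalent to $B_y=\sum_x\vtilde_{xy}A_x$, which settles both directions of the theorem at once.

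For the backward direction it is worth remarking in passing that the $B_y$ defined by $B_y=\sum_x\vtilde_{xy}A_x$ genuinely constitute an observable: each is a positive operator since $\vtilde_{xy}\ge 0$ and $A_x\in\escript(H)$, and $\sum_y B_y=\sum_x\paren{\sum_y\vtilde_{xy}}A_x=\sum_x A_x=I$ by the stochasticity of $\vtilde$ together with $A(\Omega_A)=I$. No analogous check is needed in the forward direction since $B$ is assumed to be an observable at the outset.

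The main obstacle, such as it is, is keeping the index convention for $\vtilde_{xy}$ straight when invoking Theorem~\ref{thm17}---i.e., confirming that $\vtilde_{xy}$ really equals $[V(\delta_x)](y)$ rather than its transpose. Once that is pinned down the argument is a routine interchange of a finite sum with the trace followed by the standard uniqueness of effects under trace pairing.
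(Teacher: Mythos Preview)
Your proposal is correct and follows essentially the same route as the paper: both reduce $B=V\tbullet A$ to the trace identity $\rmtr(\rho B_y)=\rmtr\bigl[\rho\sum_x\vtilde_{xy}A_x\bigr]$ for all $\rho$ and then invoke the separating property of states. The only cosmetic difference is that the paper applies Theorem~\ref{thm17} directly in matrix form, writing $V(\Phi_\rho^A)(y)=\vtilde(\Phi_\rho^A)(y)=\sum_x\vtilde_{xy}\Phi_\rho^A(x)$, whereas you first decompose $\Phi_\rho^A$ as a convex combination of Dirac measures and push through affinity; your extra remark that the $B_y$ so defined form an observable is a useful sanity check the paper omits.
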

\begin{proof}
Suppose $V\colon\rmprob (\Omega _A)\to\rmprob (\Omega _B)$ is affine and $B=V\tbullet A$. By Theorem~\ref{thm17} $\vtilde$ is a stochastic matrix and for all $\rho\in\sscript (H)$ we obtain
\begin{align*}
\rmtr (\rho B_y)&=\Phi _\rho ^B(y)=V(\Phi _\rho ^A)(y)=\vtilde (\Phi _\rho ^A)(y)=\sum _{x\in\Omega _A}\vtilde _{xy}\Phi _\rho ^A(x)\\
   &=\sum _{x\in\Omega _A}\vtilde _{xy}\rmtr (\rho A_x)=\rmtr\sqbrac{\rho\sum _{x\in\Omega _A}\vtilde _{xy}A_x}
\end{align*}
It follows that $B_y=\sum\limits _{x\in\Omega _A}\vtilde _{xy}A_x$. Conversely, suppose $B_y=\sum _{x\in\Omega _A}\vtilde _{xy}A_x$ for all
$y\in\Omega _B$. Then for all $\rho\in\sscript (H)$ and $y\in\Omega _B$ we obtain
\begin{equation*}
\Phi _\rho ^B(y)=\rmtr (\rho B_y)=\rmtr\paren{\rho\sum _{x\in\Omega _A}\vtilde _{xy}A_x}=\sum _{x\in\Omega _A}\vtilde _{xy}\Phi _\rho ^A(x)
   =(V\Phi _\rho ^A)(y)
\end{equation*}
Hence, $B=V\tbullet A$.
\end{proof}

We can identify an observable with a set $A=\brac{A_{x_1},A_{x_2},\ldots ,A_{x_n}}\subseteq\escript (H)$ satisfying
$\sum\limits _{i=1}^nA_{x_i}=I$. We say that $A$ is \textit{rank}~1, \textit{sharp}, \textit{atomic}, respectively, if $A_{x_i}$ are rank~1, projections,
1-dimensional projections. If $A$ is sharp, it follows that $A_xA_y=A_yA_x=0$ for $x\ne y$ \cite{gud220,hz12}. If $A$ is atomic, there exists an
orthonormal basis $\brac{\phi _i}$ for $H$ such that $A_{x_i}=\ket{\phi _i}\bra{\phi _i}$, $i=1,2,\ldots ,n$. Notice that $A_x$ is rank~1 if and only if
$A_x=\lambda P$ where $0<\lambda\le 1$ and $P$ is a 1-dimensional projection.

If $A=\brac{A_x\colon x\in\Omega _A}$, $B=\brac{B_y\colon y\in\Omega _B}$ are observables on $H$, their \textit{sequential product} $A\circ B$ is the observable with outcome space $\Omega _A\times\Omega _B$ given by \cite{gg02,gud220}.
\begin{equation*}
(A\circ B)_{(x,y)}=A_x\circ B_y=A_x^{1/2}B_yA_x^{1/2}
\end{equation*}
We also define the observable $B$ \textit{conditioned} by the observable $A$ as
\begin{equation*}
(B\mid A)_y=\sum _{x\in\Omega _A}(A_x\circ B_y)
\end{equation*}
It can be shown that $(B\mid A)$ coexists with $A$ \cite{gud220}. If $\mu$ is a stochastic matrix of the appropriate size, then
\begin{align}                % equation (3.1)
\label{eq31}
(A\circ\mu\tbullet B)_{(x,y)}&=A_x\circ (\mu\tbullet B)_y=A\circ\sum _{z\in\Omega _B}\mu _{zy}B_z
   =\sum _{z\in\Omega _B}\mu_{zy}A_x\circ B_z\notag\\
   &=\sum _{z\in\Omega _B}\mu _{zy}A_x^{1/2}B_zA_x^{1/2}
\end{align}
and if $\nu$ is a stochastic matrix of the appropriate size, then
\begin{align}                % equation (3.2)
\label{eq32}
\sqbrac{(\nu\tbullet A)\circ B}_{(x,y)}&=(\nu\tbullet A)_x\circ B_y=\paren{\sum _{z\in\Omega _A}\nu _{zx}A_z}\circ B_y\notag\\
   &=\paren{\sum _{z\in\Omega _A}\nu _{zx}A_z}^{1/2}B_y\paren{\sum _{z\in\Omega _A}\nu _{zx}A_z}^{1/2}
\end{align}
Notice that \eqref{eq32} is much more complicated than \eqref{eq31}. If $A$ is sharp, then \eqref{eq31},\eqref{eq32} become
\begin{align}               % equation (3.3) & equation (3.4)   
\label{eq33}
(A\circ\mu\tbullet B)_{(x,y)}&=\sum _{z\in\Omega _B}\mu _{zy}A_xB_zA_x\\
\intertext{and}
\label{eq34}
\sqbrac{(\nu\tbullet A)\circ B}_{(x,y)}
   &=\paren{\sum _{z\in\Omega _A}\nu _{zx}^{1/2}A_z}B_y\paren{\sum _{z\in\Omega _A}\nu _{zx}^{1/2}A_z}\\
   &=\sum _{r,s\in\Omega _A}\nu _{rx}^{1/2}\nu _{sx}^{1/2}A_rB_yA_s\notag
\end{align}
If $A$ and $B$ are atomic with $A_x=\ket{\phi _x}\bra{\phi _x}$ and $B_y=\ket{\psi _y}\bra{\psi _y}$ then \eqref{eq31}, \eqref{eq32} become
\begin{align}               % equation (3.5) & equation (3.6)   
\label{eq35}
(A\circ\mu\tbullet B)_{(x,y)}&=\sqbrac{\sum _{z\in\Omega _B}\mu _{zy}\ab{\elbows{\phi _x,\psi _z}}^2}\ket{\phi _x}\bra{\phi _x}\\
\intertext{and}
\label{eq36}
\sqbrac{(\nu\tbullet A)\circ B}_{(x,y)}
   &=\sum _{r,s\in\Omega _A}\nu _{rx}^{1/2}\nu _{sx}^{1/2}\elbows{\phi _r,\psi _y}\elbows{\psi _y,\phi _s}\ket{\phi _r}\bra{\phi _s}\notag\\
   &=\ket{\sum _{r\in\Omega _A}\nu _{rx}^{1/2}\elbows{\phi _r,\psi _y}\phi _r}\bra{\sum _{r\in\Omega _A}\nu _{rx}^{1/2}\elbows{\phi _r,\psi _y}\phi _r}
\end{align}
Notice from \eqref{eq35} and \eqref{eq36} that both $A\circ\mu\tbullet B$ and $(\nu\tbullet A)\circ B$ are rank~1 observables. The next lemma shows that post-processing and conditioning interact in a regular way.

\begin{lem}    % Lemma 3.2
\label{lem32}
$(\mu\tbullet B\mid A)=\mu\tbullet (B\mid A)$
\end{lem}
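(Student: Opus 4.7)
The plan is to prove the identity by a direct expansion of both sides using the definitions of post-processing (as characterized in Theorem~\ref{thm31} and equation~\eqref{eq31}) and of conditioning $(B\mid A)_y = \sum_{x\in\Omega_A}A_x^{1/2}B_yA_x^{1/2}$, and then to exchange the order of summation. Both observables in the identity have the same outcome space, namely the outcome space $\Omega_C$ of $\mu\tbullet B$, so it suffices to check equality at each outcome $y\in\Omega_C$.

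First, I would unfold the left-hand side. Since $(\mu\tbullet B)_z = \sum_{w\in\Omega_B}\mu_{wz}B_w$ for each $z\in\Omega_C$, we have
\begin{equation*}
\bigl(\mu\tbullet B\mid A\bigr)_y = \sum_{x\in\Omega_A} A_x^{1/2}(\mu\tbullet B)_y A_x^{1/2} = \sum_{x\in\Omega_A} A_x^{1/2}\paren{\sum_{w\in\Omega_B}\mu_{wy}B_w}A_x^{1/2}.
\end{equation*}
Using linearity of the map $T\mapsto A_x^{1/2}TA_x^{1/2}$, this becomes the double sum $\sum_{x\in\Omega_A}\sum_{w\in\Omega_B}\mu_{wy}A_x^{1/2}B_wA_x^{1/2}$.

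Second, I would unfold the right-hand side. Since $(B\mid A)_w = \sum_{x\in\Omega_A}A_x^{1/2}B_wA_x^{1/2}$, we get
\begin{equation*}
\bigl(\mu\tbullet (B\mid A)\bigr)_y = \sum_{w\in\Omega_B}\mu_{wy}(B\mid A)_w = \sum_{w\in\Omega_B}\mu_{wy}\sum_{x\in\Omega_A}A_x^{1/2}B_wA_x^{1/2}.
\end{equation*}
Since both $\Omega_A$ and $\Omega_B$ are finite index sets, Fubini for finite sums lets me swap the order of summation, so the two expressions coincide. Evaluating at each $y\in\Omega_C$ gives the claimed equality of observables.

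There is essentially no obstacle: the argument is a formal manipulation once one writes out the definitions carefully and keeps track of which index the stochastic matrix $\mu$ couples to. The only thing worth flagging is that the outcome space of $(B\mid A)$ is $\Omega_B$ (the same as that of $B$), so that $\mu$ acts on the right side in exactly the same way as on $\mu\tbullet B$; this is what makes the statement even well-typed, and after that the proof is a one-line swap of finite sums.
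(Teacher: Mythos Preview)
Your proof is correct and follows essentially the same approach as the paper: both arguments expand the definitions of conditioning and post-processing, then exchange the two finite sums over $\Omega_A$ and $\Omega_B$. The only cosmetic difference is that the paper writes $A_x\circ B_y$ where you write $A_x^{1/2}B_yA_x^{1/2}$, and the paper chains the equalities in one line rather than unfolding both sides separately.
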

\begin{proof}
The result follows because
\begin{align*}
(\mu\tbullet B\mid A)_z&=\sum _{x\in\Omega _A}\sqbrac{A_x\circ (\mu\tbullet B)_z}
   =\sum _{x\in\Omega _A}\paren{A_x\circ\sum _{y\in\Omega _B}\mu _{yz}B_y}\\
   &=\sum _{y\in\Omega _B}\mu _{yz}\sqbrac{\sum _{x\in\Omega _A}(A_x\circ B_y)}=\sqbrac{\mu\tbullet (B\mid A)}_z
\end{align*}
Hence, $(\mu\tbullet B\mid A)=\mu\tbullet (B\mid A)$.
\end{proof}

\begin{exam}{4}  % Example 4
This example illustrates the concepts of this section in terms of finite position and momentum observables. Let $H$ be a finite-dimensional Hilbert space with dimension $d$ and let $\brac{\phi _j\colon j=0,1,\ldots ,d-1}$ be an orthonormal basis for $H$. The \textit{finite Fourier transform} is the unitary operator on $H$ given by
\begin{equation*}
F=\frac{1}{\sqrt{d}\,}\sum _{j,k=0}^{d-1}e^{2\pi ijk/d}\ket{\phi _k}\bra{\phi _j}
\end{equation*}
where $i=\sqrt{-1}\,$ \cite{hz12}. Equivalently, $F$ is the operator satisfying
\begin{equation*}
F(\phi _k)=\frac{1}{\sqrt{d}\,}\sum _{j=0}^{d-1}e^{2\pi ijk/d}\phi _j
\end{equation*}
for all $k=0,1,\ldots ,d-1$. We call $Q=\brac{Q_j\colon j=0,1,\ldots ,d-1}$ where $Q_j=\ket{\phi _j}\bra{\phi _j}$ the \textit{finite position observable} and
$P=\brac{P_j\colon j=0,1,\ldots ,d-1}$ where $P_j=\ket{\psi _j}\bra{\psi _j}$ with $\psi _j=F\phi _j$ the \textit{finite momentum observable}. Notice that
$P_j=FQ_jF^*$, $j=0,1,\ldots ,d-1$ and $\Omega _Q=\Omega _P=\brac{0,1,\ldots ,d-1}$. We also see that $Q$ and $P$ are atomic observables. The observable $Q\circ P$ has effects
\begin{equation*}
(Q\circ P)_{(j,k)}=Q_j\circ P_k=Q_jP_kQ_j=\ab{\elbows{\phi _j,\psi _j}}^2\ket{\phi _j}\bra{\phi _j}=\tfrac{1}{d}\,Q_j
\end{equation*}
Thus, $Q\circ P$ is a rank~1 observable and $(P\mid Q)$ is the trivial observable
\begin{equation*}
(P\mid Q)_k=\sum _j(Q_j\circ P_k)=\tfrac{1}{d}\,I
\end{equation*}
for $k=0,1,\ldots ,d-1$. In a similar way, $(P\circ Q)_{(j,k)}=\tfrac{1}{d}\,P_j$ and $(Q\mid P)_k=\tfrac{1}{d}\,I$ for $j,k=0,1,\ldots ,d-1$. The distribution of $Q$ in the state $\rho\in\sscript (H)$ becomes
\begin{equation*}
\Phi _\rho ^Q(j)=\rmtr (\rho Q_j)=\elbows{\phi _j,\rho\phi _j}
\end{equation*}
for $j=0,1,\ldots ,d-1$.
\end{exam}   % 9.14.21 placed here to force paragraph, instead of \end{exam} below 

More interesting observables are obtained by post-processing. Let $\mu _{rj}$ be a stochastic matrix so that $\mu _{rj}\ge 0$ and
$\sum\limits _{j=0}^{d-1}\mu _{rj}=1$, for all $r,j=0,1,\ldots ,d-1$. Then the post-processing observable $\mu\tbullet Q$ satisfies
\begin{equation*}
(\mu\tbullet Q)_j=\sum _{r=0}^{d-1}\mu _{rj}Q_r=\sum _{r=0}^{d-1}\mu _{rj}\ket{\phi _r}\bra{\phi _r}
\end{equation*}
We see that the eigenvalues of $(\mu\tbullet Q)_j$ are $\mu _{rj}$, $r=0,1,\ldots ,d-1$ with corresponding eigenvectors $\phi _r$. The distribution of
$\mu\tbullet Q$ in the state $\rho\in\sscript (H)$ becomes
\begin{align*}
\Phi _\rho ^{\mu\tbullet Q}(j)&=\rmtr\sqbrac{\rho (\mu\tbullet Q)_j}=\rmtr\sqbrac{\rho\sum _{r=0}^{d-1}\mu _{rj}\ket{\phi _r}\bra{\phi _r}}\\
   &=\sum _{r=0}^{d-1}\mu _{rj}\elbows{\phi _r,\rho\phi _r}=\sum _{r=0}^{d-1}\mu _{rj}\Phi _\rho ^Q(r)
\end{align*}
The observable $(\mu\tbullet Q)\circ P$ satisfies
\begin{align}                % equation (3.7)
\label{eq37}
\sqbrac{(\mu\tbullet Q)\circ P}_{(j,k)}&=(\mu\tbullet Q)_j\circ P_k=(\mu\tbullet Q)_j^{1/2}P_k(\mu\tbullet Q_j)^{1/2}\notag\\
   &=\sum _{r=0}^{d-1}\mu _{rj}^{1/2}\ket{\phi _r}\bra{\phi _r}P_k\sum _{s=0}^{d-1}\mu _{sj}^{1/2}\ket{\phi _s}\bra{\phi _s}\notag\\
   &=\sum _{r,s=0}^{d-1}\mu _{rj}^{1/2}\mu _{sj}^{1/2}\elbows{\phi _r,\psi _k}\elbows{\psi _k,\phi _s}\ket{\phi _r}\bra{\phi _s}\notag\\
   &=\frac{1}{d}\sum _{r,s=0}^{d-1}\mu _{rj}^{1/2}\mu _{sj}^{1/2}e^{2\pi ik(s-r)}\ket{\phi _r}\bra{\phi _s}
\end{align}
Equation\eqref{eq37} also follows from \eqref{eq36}.\hfill\qedsymbol
%\end{exam}

\section{SIC Observables}  % Section 4
This section is more speculative than the previous ones and we do not come to many definite conclusions. A finite observable $A$ is
\textit{informationally complete} (IC) if $\rmtr (\rho _1A_x)=\rmtr (\rho _2A_x)$ for all $x\in\Omega _A$ implies that $\rho _1=\rho _2$. Equivalently, $A$ is informationally complete if $\Phi _{\rho _1}^A=\Phi _{\rho _2}^A$ implies that $\rho _1=\rho _2$. It can be shown that there exist IC observables for every finite dimensional Hilbert space $H$ \cite{hz12}. An Observable $A$ on a Hilbert space $H$ with $\dim H=d$ is \textit{symmetric} if \cite{hz12}:

\begin{list} {(S\arabic{cond})}{\usecounter{cond}
\setlength{\rightmargin}{\leftmargin}}
%(S1)
\item $\ab{\Omega _A}=d^2$,
%(S2)
\item $A$ has rank~1,
%(S3)
\item $\rmtr (A_x)=1/d$ for all $x\in\Omega _A$,
%(S4)
\item $\rmtr (A_xA_y)=1/d^2(d+1)$ for all $x\ne y\in\Omega _A$.
\end{list}
It can be shown that $d^2$ is the smallest cardinality for the outcome space of an IC observable \cite{hz12}. Also, $\rmtr (A_x)=1/d$ if $\rmtr (A_x)$ is constant for all $x\in\Omega _A$ and $\rmtr (A_xA_y)=1/d^2(d+1)$ for $x\ne y\in\Omega _A$ if $\rmtr (A_xA_y)$ is constant for $x\ne y$ \cite{hz12}. A symmetric IC observable is called a SIC \textit{observable}. An important unsolved problem is whether SIC observables exist for every finite dimensional Hilbert space \cite{hz12}. It is not even known whether high dimensional SIC observables exist. We would like to propose a possible method for attacking this problem. Unfortunately, we have not been able to complete this method and we leave this to future work.

Let $\dim H=d$ and let $A=\brac{\ket{\phi _x}\bra{\phi _x}\colon x\in\Omega _A}$, $B=\brac{\ket{\psi _y}\bra{\psi _y}\colon y\in\Omega _B}$ be atomic observables. For a $d\times d$ stochastic matrix $\mu$ we define the observable $C=(\nu\tbullet A)\circ B$. For example, $(\mu\tbullet Q)\circ P$ of Example~4 is such an observable. Letting $\eta _{xy}\in H$ be the vector given by
\begin{equation}                % equation (4.1)
\label{eq41}
\eta _{xy}=\sum _{r\in\Omega _A}\nu _{rx}^{1/2}\elbows{\phi _r,\psi _y}\phi _r
\end{equation}
We conclude from \eqref{eq36} that for all $(x,y)\in\Omega _C$ we have that
\begin{equation}                % equation (4.2)
\label{eq42}
C_{(x,y)}=\ket{\eta _{xy}}\bra{\eta _{xy}}
\end{equation}
It immediately follows that $C$ satisfies (S1) and (S2). We say that a stochastic matrix $\nu$ is \textit{doubly stochastic} if $\sum\limits _x\nu _{xy}=1$ for all $y$ \cite{hz12}. The bases $\brac{\phi _r}$, $\brac{\psi _y}$ are \textit{mutually unbiased bases} (MUB) if $\ab{\elbows{\phi _r,\psi _y}}^2=1/d$ for all $r,y=1,2,\ldots ,d$ \cite{gud220}. It is easy to show that there exist pairs of MUB for every finite dimension. In fact, the two bases in Example~4 are MUB.

\begin{thm}    % Theorem 4.1
\label{thm41}
{\rm{(a)}}\enspace If $\nu$ is doubly stochastic and $\brac{\phi _r}$, $\brac{\psi _y}$ are MUB, then $\rmtr\sqbrac{C_{(x,y)}}=1/d$ for all $x,y$.
{\rm{(b}}\enspace If $\rmtr\sqbrac{C_{(x,y)}}=1/d$ for all $x,y$ then $\nu$ is doubly stochastic.
\end{thm}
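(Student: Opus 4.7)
The plan is to reduce both parts to the single identity
\[
\rmtr\sqbrac{C_{(x,y)}} = \doubleab{\eta_{xy}}^2 = \sum_{r\in\Omega_A}\nu_{rx}\ab{\elbows{\phi_r,\psi_y}}^2,
\]
which follows from \eqref{eq42} (the trace of the rank-one operator $\ket{\eta_{xy}}\bra{\eta_{xy}}$ is $\doubleab{\eta_{xy}}^2$) together with the orthonormality of $\brac{\phi_r}$ applied to the definition \eqref{eq41} of $\eta_{xy}$: the cross terms vanish, leaving only the diagonal $r=s$ contribution $\nu_{rx}\ab{\elbows{\phi_r,\psi_y}}^2$. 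Once this is in hand, both implications are short computations.

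For part (a), the MUB hypothesis gives $\ab{\elbows{\phi_r,\psi_y}}^2 = 1/d$, so the displayed identity collapses to $(1/d)\sum_{r\in\Omega_A}\nu_{rx}$, and double-stochasticity of $\nu$ (the column-sum condition $\sum_r\nu_{rx}=1$) then produces $1/d$, as claimed.

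For part (b), I would return to the same identity without invoking MUB, and sum instead over $y\in\Omega_B$. Interchanging the two finite sums gives
\[
\sum_{y\in\Omega_B}\rmtr\sqbrac{C_{(x,y)}} = \sum_{r\in\Omega_A}\nu_{rx}\sum_{y\in\Omega_B}\ab{\elbows{\phi_r,\psi_y}}^2 = \sum_{r\in\Omega_A}\nu_{rx},
\]
where the inner sum equals $\doubleab{\phi_r}^2=1$ by Parseval's identity, since $\brac{\psi_y\colon y\in\Omega_B}$ is an orthonormal basis of $H$ (it comes from an atomic observable with $\ab{\Omega_B}=d$). The hypothesis $\rmtr\sqbrac{C_{(x,y)}}=1/d$ turns the left side into $\ab{\Omega_B}/d = 1$, so $\sum_r\nu_{rx}=1$ for every $x$, which is exactly the doubly-stochastic condition.

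No genuine obstacle is anticipated; the whole argument is essentially bookkeeping around \eqref{eq41} and \eqref{eq42}. The one subtle point is that the statement of (b) does not re-impose MUB, but none is needed: orthonormality of $\brac{\psi_y}$, which is automatic from $B$ being atomic, is enough to run the Parseval step, so the column sums $\sum_r\nu_{rx}=1$ are forced purely by the trace values of the $C_{(x,y)}$.
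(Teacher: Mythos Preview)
Your proposal is correct and follows essentially the same approach as the paper: both first establish the identity $\rmtr\sqbrac{C_{(x,y)}}=\sum_r\nu_{rx}\ab{\elbows{\phi_r,\psi_y}}^2$ from \eqref{eq41}--\eqref{eq42}, then for (a) substitute the MUB value $1/d$ and use the column-sum condition, and for (b) sum over $y$ and use orthonormality of $\brac{\psi_y}$ to force $\sum_r\nu_{rx}=1$. Your write-up is slightly more explicit about the Parseval step and the cardinality $\ab{\Omega_B}=d$, but the argument is the same.
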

\begin{proof}
(a)\enspace Applying \eqref{eq41}, \eqref{eq42} we have that
\begin{align*}
\rmtr\sqbrac{C_{(x,y)}}&=\doubleab{\eta _{xy}}^2
   =\elbows{\sum _r\nu _{rx}^{1/2}\elbows{\phi _r,\psi _y}\phi _r,\sum _s\nu _{sx}^{1/2}\elbows{\phi _s,\psi _y}\psi _s}\\
   &=\sum _r\nu _{rx}\ab{\elbows{\phi _r,\psi _y}}^2
\end{align*}
for all $x,y$. If $\nu$ is doubly stochastic and $\brac{\phi _r}$, $\brac{\psi _y}$ are MUB we conclude that
\begin{equation*}
\rmtr\sqbrac{C_{(x,y)}}=\frac{1}{d}\sum _r\nu _{rx}=\frac{1}{d}
\end{equation*}
for all $x,y$.
(b)\enspace If $\rmtr\sqbrac{C_{(x,y)}}=1/d$ for all $x,y$, then by (a) we obtain
\begin{equation*}
\sum _r\nu _{rx}\ab{\elbows{\phi _r,\psi _y}}^2=\frac{1}{d}
\end{equation*}
for all $x,y$. Summing over $y$ gives $\sum\limits _r\nu _{rx}=1$ for all $x$, so $\nu$ is doubly stochastic.
\end{proof}

In Theorem~\ref{thm41}(b), if $\rmtr\sqbrac{C_{(x,y)}}=1/d$ for all $x,y$, then $\brac{\phi _r}$, $\brac{\psi _y}$ need not be MUB so the converse of Theorem~\ref{thm41}(a) does not hold. For example, suppose $\nu _{rx}=1/d$ for all $r,x$. Then $\rmtr\sqbrac{C_{(x,y)}}=1/d$ for all $x,y$ but
$\brac{\phi _r}$, $\brac{\psi _y}$ can be arbitrary bases. We conclude from Theorem~\ref{thm41}(a) that if $\nu$ is doubly stochastic and
$\brac{\phi _r}$, $\brac{\psi _y}$ are MUB, then Condition~(S3) holds.

\begin{lem}    % Lemma 4.2
\label{lem42}
{\rm{(a)}}\enspace Condition~(S4) holds if and only if $\elbows{\eta _{xy},\eta _{x'y'}}^2=1/d^2(d+1)$ for all $(x,y)\ne (x',y')$.
{\rm{(b)}}\enspace The observable $C$ is IC if and only if for $\rho _1,\rho _2\in\sscript (H)$ we have that
$\elbows{\rho _1\eta _{xy},\eta _{xy}}=\elbows{\rho _2\eta _{xy},\eta _{xy}}$ for all $x,y$ implies that $\rho _1=\rho _2$.
\end{lem}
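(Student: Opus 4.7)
The plan is to use the rank-one formula $C_{(x,y)}=\ket{\eta _{xy}}\bra{\eta _{xy}}$ from \eqref{eq42} and directly unpack each assertion from the relevant definition.

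For part (a), my first step is to record the elementary identity that, for any vectors $u,v\in H$,
\begin{equation*}
\rmtr\sqbrac{\ket{u}\bra{u}\,\ket{v}\bra{v}}=\ab{\elbows{u,v}}^2,
\end{equation*}
which follows by collapsing $\bra{u}\ket{v}=\elbows{u,v}$ in the middle and using $\rmtr\paren{\ket{u}\bra{v}}=\elbows{v,u}$. Specializing $u=\eta _{xy}$ and $v=\eta _{x'y'}$ and applying \eqref{eq42} gives $\rmtr\sqbrac{C_{(x,y)}C_{(x',y')}}=\ab{\elbows{\eta _{xy},\eta _{x'y'}}}^2$. Condition~(S4) applied to $C$ is by definition the statement that this quantity equals $1/d^2(d+1)$ for all $(x,y)\ne (x',y')$, so the equivalence is immediate. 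I read the expression $\elbows{\eta _{xy},\eta _{x'y'}}^2$ in the statement as $\ab{\elbows{\eta _{xy},\eta _{x'y'}}}^2$, since the inner products appearing in \eqref{eq41} are complex in general.

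For part (b), I would unfold the definition of informational completeness. By definition, $C$ is IC if and only if, for all $\rho _1,\rho _2\in\sscript (H)$, the equality $\Phi _{\rho _1}^C=\Phi _{\rho _2}^C$ forces $\rho _1=\rho _2$. Using \eqref{eq42} and the definition of the distribution,
\begin{equation*}
\Phi _\rho ^C(x,y)=\rmtr\sqbrac{\rho\ket{\eta _{xy}}\bra{\eta _{xy}}}=\elbows{\rho\eta _{xy},\eta _{xy}},
\end{equation*}
so $\Phi _{\rho _1}^C=\Phi _{\rho _2}^C$ is literally the condition $\elbows{\rho _1\eta _{xy},\eta _{xy}}=\elbows{\rho _2\eta _{xy},\eta _{xy}}$ for all $(x,y)\in\Omega _C$. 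Substituting this into the definition of IC yields exactly the reformulation claimed in (b).

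There is no substantive obstacle here: both parts are direct unpackings of definitions via the rank-one formula \eqref{eq42}, together with the standard trace identity for products of rank-one projections used in (a). The only small judgment call is interpreting $\elbows{\cdot,\cdot}^2$ in (a) as $\ab{\elbows{\cdot,\cdot}}^2$, which is forced by the computation.
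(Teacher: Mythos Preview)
Your argument is correct and follows exactly the same route as the paper: both parts are immediate from \eqref{eq42} together with the trace identities $\rmtr\paren{\ket{u}\bra{u}\,\ket{v}\bra{v}}=\ab{\elbows{u,v}}^2$ and $\rmtr\paren{\rho\ket{u}\bra{u}}=\elbows{\rho u,u}$. Your remark that $\elbows{\eta _{xy},\eta _{x'y'}}^2$ must be read as $\ab{\elbows{\eta _{xy},\eta _{x'y'}}}^2$ is well taken and matches the computation the paper intends.
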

\begin{proof}
(a)\enspace Applying \eqref{eq42} we have that
\begin{equation*}
\rmtr\sqbrac{C_{(x,y)}C_{(x',y')}}=\elbows{\eta _{xy},\eta _{x'y'}}^2
\end{equation*}
and the result follows.
(b)\enspace Applying \eqref{eq42} we have that
\begin{equation*}
\rmtr\sqbrac{\rho C_{(x,y)}}=\rmtr\paren{\rho\ket{\eta _{xy}}\bra{\eta _{xy}}}=\elbows{\rho\eta _{xy},\eta _{xy}}
\end{equation*}
and the result follows.
\end{proof}

Theorem~\ref{thm41} and Lemma~\ref{lem42} complete conditions under which $C$ become a SIC observable.

We now illustrate our SIC method in the qubit case $H=\complex ^2$. Let $\phi _1=(1,0)$, $\phi _2=(0,1)$ be the standard basis for $H$ and let $\brac{\psi _1,\psi _2}$ be a basis for $H$ such that $\brac{\phi _i}$, $\brac{\psi _j}$ are MUB. For example, we could use
\begin{align*}
\psi _1&=F(\phi _1)=\frac{1}{\sqrt{2}\,}\,(\phi _1+\phi _2)\\
\psi _2&=F(\phi _2)=\frac{1}{\sqrt{2}\,}\,(\phi _1-\phi _2)
\end{align*}
of Example~4. Define the atomic observables $A=\brac{\ket{\phi _1}\bra{\phi _1},\ket{\phi _2}\bra{\phi _2}}$, 
$B=\brac{\ket{\psi _1}\bra{\psi _1},\ket{\psi _2}\bra{\psi _2}}$ with $\Omega _A=\Omega _B=\brac{1,2}$. Let $\nu$ be the doubly stochastic matrix
\begin{equation*}
\nu=\begin{bmatrix}a&1-a\\1-a&a\end{bmatrix}
\end{equation*}
$0\le a\le 1$. Define the observable $C=(\nu\tbullet A)\circ B$ and the effects
\begin{equation*}
D=\begin{bmatrix}a^{1/2}&0\\0&(1-a)^{1/2}\end{bmatrix},\quad
   E=\begin{bmatrix}(1-a)^{1/2}&0\\0&a^{1/2}\end{bmatrix}
\end{equation*}
Letting $\eta _{xy}$, $x,y\in\brac{1,2}$, be the vectors defined by \eqref{eq41}, we have by \eqref{eq42} that $\eta _{11}=D\psi _1$,
$\eta _{12}+D\psi _2$, $\eta _{21}=E\psi _1$, $\eta _{22}=E\psi _2$.

We have that $C$ satisfies Conditions~(S1), (S2) and (S3). According to Theorem~\ref{thm41}(a), $C$ satisfies Condition~(S4) if and only if
$\elbows{\eta _{jk},\eta _{j'k'}}=1/12$ when $(j,k)\ne (j',k')$. Now
\begin{equation*}
\elbows{\eta _{11},\eta _{22}}^2=\elbows{D\psi _1,E\psi _2}^2=\elbows{ED\psi _1,\psi _2}^2=\elbows{a^{1/2}(1-a)^{1/2}\psi _1,\psi _2}^2=0
\end{equation*}
Hence, (S4) is not satisfied.

If $a=0$, $1$ or $1/2$, it is easy to check that $C$ is not IC. Unfortunately, even when $a\ne 0,1,1/2$, $C$ need not be IC. For example, let 
$\psi _1\tfrac{1}{\sqrt{2}\,}\,(\phi _1+\phi _2)$, $\psi _2=\tfrac{1}{\sqrt{2}\,}\,(\phi _1-\phi _2)$ as before. We then have the following result.

\begin{thm}    % Theorem 4.3
\label{thm43}
If $a\ne 0,1,1/2$ and $G$ is a $2\times 2$ self-adjoint matrix, $\rm\sqbrac{GC_{(j,k)}}=0$ for all $j,k=1,2$, if and only if 
$G=\begin{bmatrix}0&i\alpha\\-i\alpha&0\end{bmatrix}$ where $\alpha\in\real$.
\end{thm}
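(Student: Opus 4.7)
The plan is to parameterize a general self-adjoint $2\times 2$ matrix as
\begin{equation*}
G=\begin{bmatrix}p & q+ir\\ q-ir & s\end{bmatrix},\qquad p,q,r,s\in\real,
\end{equation*}
and then exploit the fact that the vectors $\eta_{jk}$ computed from the data in the excerpt all turn out to have \emph{real} coordinates in the standard basis $\{\phi_1,\phi_2\}$. Indeed, using $D$, $E$, and the $\psi_i$ provided, one finds
\begin{equation*}
\eta_{11}=\tfrac{1}{\sqrt 2}(a^{1/2},(1-a)^{1/2}),\quad \eta_{12}=\tfrac{1}{\sqrt 2}(a^{1/2},-(1-a)^{1/2}),
\end{equation*}
and analogously $\eta_{21},\eta_{22}$ with the roles of $a$ and $1-a$ interchanged. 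By \eqref{eq42}, $\rmtr[GC_{(j,k)}]=\elbows{\eta_{jk},G\eta_{jk}}$, and a direct expansion shows that for any real vector $\eta=(u,v)$,
\begin{equation*}
\elbows{\eta,G\eta}=pu^2+2quv+sv^2,
\end{equation*}
with the parameter $r$ dropping out. This immediately gives the ``if'' direction: when $p=q=s=0$, every $\rmtr[GC_{(j,k)}]$ vanishes regardless of $\alpha=r$.

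For the ``only if'' direction, the plan is to write down the four equations $\rmtr[GC_{(j,k)}]=0$ explicitly. The pair $(j=1,k=1)$ and $(j=1,k=2)$ differ only in the sign of $v$, so their sum and difference yield
\begin{equation*}
pa+s(1-a)=0 \qquad\text{and}\qquad 2q\sqrt{a(1-a)}=0.
\end{equation*}
Likewise, $(j=2,k=1)$ and $(j=2,k=2)$ give
\begin{equation*}
p(1-a)+sa=0 \qquad\text{and}\qquad 2q\sqrt{a(1-a)}=0.
\end{equation*}
The hypothesis $a\neq 0,1$ forces $q=0$ from the second equations. The two remaining equations for $(p,s)$ form the linear system with coefficient matrix $\bigl[\begin{smallmatrix}a & 1-a\\ 1-a & a\end{smallmatrix}\bigr]$, whose determinant is $a^2-(1-a)^2=2a-1$; this is nonzero precisely because $a\ne 1/2$. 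Hence $p=s=0$, and $G$ has the asserted form with $\alpha=r$ free in $\real$.

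There is no real obstacle here; the only subtlety is recognizing that the three exclusions $a\neq 0,1,1/2$ correspond exactly to the three places where the above linear algebra could degenerate (the factor $\sqrt{a(1-a)}$ vanishing, and the determinant $2a-1$ vanishing). The computation of the $\eta_{jk}$ in explicit coordinates is routine but must be done carefully to verify the real-entry property that makes the argument work.
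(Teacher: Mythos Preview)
Your proof is correct and follows essentially the same approach as the paper: both reduce $\rmtr[GC_{(j,k)}]=0$ to the four quadratic forms $\elbows{\eta_{jk},G\eta_{jk}}$, observe that only $\rmre G_{12}$ (your $q$) enters because the $\eta_{jk}$ have real coordinates, and then solve the resulting linear system in the diagonal entries. Your presentation is slightly cleaner in that you isolate the real-coordinate observation up front and dispatch the $2\times 2$ system for $(p,s)$ via its determinant $2a-1$, whereas the paper substitutes one equation into the other to reach the same contradiction $(a-1)^2=a^2$; but these are cosmetic differences rather than a different route.
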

\begin{proof}
By \eqref{eq42}, $\rmtr\sqbrac{GC_{(j,k)}}=0$ if and only if
\begin{equation}                % equation (4.3)
\label{eq43}
\elbows{G\eta _{jk},\eta _{jk}}=0
\end{equation}
for all $j,k=1,2$. Letting $G=\begin{bmatrix}(1-a)^{1/2}&0\\0&a^{1/2}\end{bmatrix}$ we conclude that \eqref{eq43} holds if and only if
\begin{align}               % equation (4.4), equation (4.5), equation (4.6), equation (4.7)
\label{eq44}
\elbows{GD\psi _1,D\psi _1}&=\frac{1}{2}\sqbrac{G_{11}a+G_{22}(1-a)+2a^{1/2}(1-a)^{1/2}\rmre G_{12}}=0\\\noalign{\medskip}
\label{eq45}
\elbows{GE\psi _1,E\psi _1}&=\frac{1}{2}\sqbrac{G_{11}(1-a)+G_{22}a+2a^{1/2}(1-a)^{1/2}\rmre G_{12}}=0\\\noalign{\medskip}
\label{eq46}
\elbows{GD\psi _2,D\psi _2}&=\frac{1}{2}\sqbrac{G_{11}a+G_{22}(1-a)-2a^{1/2}(1-a)^{1/2}\rmre G_{12}}=0\\\noalign{\medskip}
\label{eq47}
\elbows{GE\psi _2,E\psi _2}&=\frac{1}{2}\sqbrac{G_{11}(1-a)+G_{22}a-2a^{1/2}(1-a)^{1/2}\rmre G_{12}}=0
\end{align}
Adding \eqref{eq44} and \eqref{eq46} gives $G_{11}a+G_{22}(1-a)=0$ and hence, $G_{11}=\tfrac{a-1}{a}\,G_{22}$. If $G_{22}\ne 0$, then
$(a-1)^2=a^2$ so $a=1/2$ which is a contradiction. Hence, $G_{22}=0$ and it follows that $G_{11}=\rmre G_{12}=0$. Hence, $G_{12}=i\alpha$,
$\alpha\in\real$ and the result follows. The converse is clear.
\end{proof}

\begin{cor}    % Corollary 4.4
\label{cor44}
If $\psi _1=\tfrac{1}{\sqrt{2}\,}\,(\phi _1+\phi _2)$, $\psi _2=\tfrac{1}{\sqrt{2}\,}\,(\phi _1-\phi _2)$, then $C$ is not IC.
\end{cor}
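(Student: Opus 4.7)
The plan is to exploit Theorem~\ref{thm43} to exhibit, for each value of $a$, two distinct states yielding the same distribution for $C$, thereby violating informational completeness.

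First I would split into cases based on the parameter $a$. For $a \notin \{0, 1, 1/2\}$, Theorem~\ref{thm43} provides a nonzero self-adjoint matrix
$G = \begin{bmatrix} 0 & i\alpha \\ -i\alpha & 0 \end{bmatrix}$ with $\alpha \ne 0$ satisfying $\rmtr\sqbrac{GC_{(j,k)}} = 0$ for all $j,k$. Observe that $\rmtr(G) = 0$ and the eigenvalues of $G$ are $\pm\alpha$, so for $\alpha$ of small enough magnitude, $\rho_1 = \tfrac{1}{2}I + \tfrac{1}{2}G$ and $\rho_2 = \tfrac{1}{2}I - \tfrac{1}{2}G$ are both positive with trace $1$, hence states in $\sscript(H)$. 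Then
\begin{equation*}
\Phi_{\rho_1}^C(j,k) - \Phi_{\rho_2}^C(j,k) = \rmtr\sqbrac{GC_{(j,k)}} = 0
\end{equation*}
for every $(j,k)$, while $\rho_1 \ne \rho_2$, so $C$ is not IC.

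For the excluded values I would handle each separately using the explicit form of $\nu$. When $a = 1/2$, the matrix $\nu$ has all entries equal to $1/2$, so $(\nu\tbullet A)_j = \tfrac{1}{2}(A_1 + A_2) = \tfrac{1}{2}I$ for $j = 1,2$. Consequently $C_{(j,k)} = \tfrac{1}{2}B_k$ is independent of $j$, so the effects of $C$ span at most the span of $\{B_1, B_2\}$, and since $B$ is a two-outcome observable on a two-dimensional space it cannot be IC (the minimum cardinality for IC is $d^2 = 4$ by the cited result); any two states with the same $B$-distribution then yield the same $C$-distribution. When $a = 0$ or $a = 1$, $\nu$ is a permutation matrix, so $(\nu\tbullet A)$ is just a relabeling of $A$, and since $\brac{\phi_r},\brac{\psi_y}$ are MUB with $\ab{\elbows{\phi_j,\psi_k}}^2 = 1/2$, formula \eqref{eq36} (or direct calculation as in Example 4) gives $C_{(j,k)} = \tfrac{1}{2}(\nu\tbullet A)_j$, independent of $k$; thus $C$ only carries information about $A$, which again is not IC.

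The main obstacle will be verifying the positivity of the candidate states $\rho_1, \rho_2$ in the generic case and presenting the edge cases compactly without obscuring the fact that they are essentially trivial. Since Theorem~\ref{thm43} has already done the real work in the nontrivial regime, the corollary reduces to the brief case analysis sketched above.
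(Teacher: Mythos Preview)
Your proposal is correct and follows essentially the same route as the paper: in the generic case $a\notin\{0,1,1/2\}$ you invoke Theorem~\ref{thm43} to obtain a nonzero traceless self-adjoint $G$ with $\rmtr\sqbrac{GC_{(j,k)}}=0$, then perturb the maximally mixed state by $\pm G$ to produce two distinct states with identical $C$-statistics (the paper uses $\rho_1=\tfrac{1}{2}I$ and $\rho_2=\tfrac{1}{2}I+G$, which is the same idea). Your explicit treatment of the edge cases $a\in\{0,1,1/2\}$ is more thorough than the paper's, which dispatches them informally in the paragraph preceding Theorem~\ref{thm43} and implicitly takes the corollary under the standing hypothesis $a\ne 0,1,1/2$.
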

\begin{proof}
Define $\rho _1=\tfrac{1}{2}\begin{bmatrix}1&1\\0&1\end{bmatrix}$, $\rho _2=\begin{bmatrix}1/2&i\alpha\\-i\alpha&1/2\end{bmatrix}$ where
$0<\alpha<1/2$. Then $\rho _1\in\sscript (H)$ and it is easy to check that $\rho _2\in\sscript (H)$ by showing that the eigenvalues of $\rho _2$ are
$\tfrac{1}{2}\pm\alpha$. Since $\rho _2-\rho _1=\begin{bmatrix}0&i\alpha\\-i\alpha&0\end{bmatrix}$, it follows from Theorem~\ref{thm43} that
\begin{equation*}
\rmtr\sqbrac{\rho _2C_{(j,k)}}=\rmtr\sqbrac{\rho _1C_{(j,k)}}
\end{equation*}
for all $j,k=1,2$. But $\rho _2\ne\rho _1$ so $C$ is not IC.
\end{proof}

It is possible that for other $\brac{\psi _1,\psi _2}$ we obtain an IC observable $C$. It is also possible that for higher dimensional spaces we obtain SIC observables using this method. Even though $C$ is not IC, it satisfies two necessary (but not sufficient) conditions for IC \cite{hz12}(Prop~3.35). If $C$ is IC these conditions are:
(a)\enspace $C_{(j,k)}$ does not have both eigenvalues 0,1 and
(b)\enspace for all $j,k$ there exists $j',k'$ such that
\begin{equation*}
C_{(j,k)}C_{(j',k')}\ne C_{(j',k')}C_{(j,k)}
\end{equation*}
Indeed, (a) is clear and (b) follows from the fact that
\begin{equation*}
C_{(1,1)}C_{(1,2)}\ne C_{(1,2)}C_{(1,1)}\quad\hbox{and}\quad C_{(2,2)}C_{(2,1)}\ne C_{(2,1)}C_{(2,2)}
\end{equation*}

\end{document}